\pdfoutput=1
%
\documentclass[runningheads]{llncs}

\usepackage{cite}
\usepackage{amsmath,amssymb,amsfonts}
\usepackage{graphicx}
\usepackage{textcomp}
\usepackage[dvipsnames]{xcolor}
\usepackage[font=small,skip=1pt]{caption}
\usepackage[caption=false,font=footnotesize]{subfig}
\usepackage{enumitem}
\usepackage[export]{adjustbox}
\usepackage{multirow}
\usepackage{float}
\usepackage{bigdelim}
\usepackage{makecell}
\usepackage{listings}
\usepackage[plain]{algorithm}
\usepackage[noend]{algpseudocode}
\usepackage{url}

\usepackage[breaklinks,colorlinks=true]{hyperref}
\usepackage{tlatex}
\usepackage{stmaryrd}
\usepackage{centernot}
\usepackage{comment}
\usepackage{empheq}
\usepackage{appendix}
\usepackage{empheq}
\usepackage{amssymb}
\usepackage{pifont}
\usepackage[normalem]{ulem}
\usepackage{setspace}
\usepackage{scalerel}
\usepackage{bm}
\usepackage{xspace}
\usepackage{soul}

\usepackage{tikz}
\usetikzlibrary{shapes.geometric, arrows}

\newcommand{\mathcolorbox}[2]{\colorbox{#1}{$\displaystyle #2$}}

\newcommand{\cmark}{\ding{51}}%
\newcommand{\xmark}{\ding{55}}%














\setlist[itemize]{leftmargin=*}
\setlist[itemize]{noitemsep, topsep=0pt}



\setlength{\parskip}{0pt}
\setlength{\parsep}{0pt}
\setlength{\topskip}{0pt}
\setlength{\topmargin}{0pt}
\setlength{\topsep}{0pt}
\setlength{\partopsep}{0pt}

\setlength{\abovecaptionskip}{0pt}
\setlength{\belowcaptionskip}{0pt}
\setlength{\abovedisplayskip}{0pt}
\setlength{\belowdisplayskip}{0pt}
\setlength{\abovedisplayshortskip}{0pt}
\setlength{\belowdisplayshortskip}{0pt}


 





\makeatletter
\def\@citecolor{blue}%
\def\@urlcolor{blue}%
\def\@linkcolor{blue}%

\def\orcidID#1{\smash{\href{http://orcid.org/#1}{\protect\raisebox{-1.25pt}{\protect\includegraphics{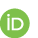}}}}}
\makeatother

\usepackage{bbding}

\newcommand{\tlasize}{\fontsize{8pt}{11pt}\selectfont}
\newcommand{\eqsize}{\small}
\newcommand{\eqsizesmall}{\small}

\definecolor{nonecolor}{HTML}{666666}
\definecolor{highcolor}{HTML}{790C0C}
\definecolor{failcolor}{HTML}{666666}

\definecolor{hlcolor}{HTML}{DCDCDC}
\DeclareRobustCommand{\hlc}[1]{{\sethlcolor{hlcolor}\hl{#1}}}

\definecolor{commentcolor}{HTML}{787c82}

\algnewcommand\algorithmicforeach{\textbf{for each}}
\algdef{S}[FOR]{ForEach}[1]{\algorithmicforeach\ #1\ \algorithmicdo}

\renewcommand{\implies}{\Rightarrow}



\newcommand{\fin}[1]{{\tt #1}}

\newcommand{\lo}[1]{{\fin{#1}}}

\newcommand{\blind}[1]{}
\newcommand{\hide}[1]{}

\usepackage{mathtools}

\newcommand{\regularf}[1]{\textit{#1}}

\algnewcommand{\algorithmicgoto}{\textbf{go to} Line}%
\algnewcommand{\Goto}[1]{\algorithmicgoto~\ref{#1}}%

\algnewcommand{\algorithmicnot}{\textbf{not}}%
\algnewcommand{\Not}{\algorithmicnot}%


\newcommand{\quant}{\Phi}
\newcommand{\icpo}{IC3PO\xspace}
\newcommand{\updr}{UPDR\xspace}
\newcommand{\folic}{fol-ic3\xspace}
\newcommand{\ifour}{I4\xspace}

\newcommand{\phih}{\widehat{\varphi}}
\newcommand{\s}{\fin{s}}

\newcommand{\same}{\equiv}

\newcommand{\sz}{\sigma}
\newcommand{\szk}{\sz}
\newcommand{\sznext}{\szk^{\texttt{+}}[\s_{\fin{i}}]}
\newcommand{\szinit}{\text{\textit{base size}}}

\newcommand{\invk}{Inv}
\newcommand{\cexk}{Cex}

\newcommand{\pro}{\mathcal{P}}
\newcommand{\fpro}{\hat{\pro}}
\newcommand{\fprok}{\fpro}

\newcommand{\fpronext}{\pro(\sznext)}
\newcommand{\numc}{\texttt{\#}}

\newcommand{\naiveconsensus}{\textit{ToyConsensus}\xspace}
\newcommand{\naiveinstance}{\textit{ToyConsensus}(3,3,3)}

\newcommand{\defsym}{$\triangleq$\xspace}
\newcommand{\ndefsym}{\phantom{$\triangleq$}\xspace}
\newcommand{\altsym}{\AE\xspace}
\newcommand{\eprsym}{\phantom{$\leftrightarrows$}\xspace}
\newcommand{\neprsym}{$\leftrightarrows$\xspace}

\newcommand{\symic}{\textit{SymIC3}\xspace}
\newcommand{\symboost}{\textit{SymBoost$\forall\exists$}\xspace}

\setcounter{secnumdepth}{3}

\begin{document}

\title{On Symmetry and Quantification: \\ A New Approach to Verify Distributed Protocols}

\titlerunning{On Symmetry and Quantification}
%
\author{Aman Goel\inst{(}\Envelope\inst{)}\orcidID{0000-0003-0520-8890} \and Karem Sakallah}
\authorrunning{A. Goel and K. Sakallah}
%
\institute{University of Michigan, Ann Arbor MI 48105, USA\\
\email{\{amangoel,karem\}@umich.edu}}
\maketitle              

\begin{abstract}
Proving that an unbounded distributed protocol satisfies a given safety property amounts to finding a quantified inductive invariant that implies the property for all possible instance sizes of the protocol.
Existing methods for solving this problem can be described
as search procedures for an invariant whose quantification prefix fits a particular template. 
We propose an alternative
\textit{constructive} approach that does not prescribe, \textit{a priori}, a specific quantifier prefix. Instead, the required prefix is automatically
inferred without any search by carefully analyzing the structural symmetries
of the protocol.
The key insight underlying this approach is that symmetry and quantification are closely related concepts that express protocol  \textit{invariance} under different re-arrangements of its components. 
We propose \textit{symmetric incremental induction}, an extension of the finite-domain IC3/PDR algorithm, that automatically derives the required \textit{quantified inductive invariant} by exploiting the connection between symmetry and quantification. 
While various attempts have been made to exploit symmetry in verification applications, to our knowledge, this is the first demonstration of a direct link between symmetry and quantification in the context of clause learning during incremental induction.
We also describe a procedure to automatically find a minimal finite size, the \textit{cutoff}, that yields a quantified invariant proving safety for any size.

\hspace{12pt} Our approach is implemented in \icpo, a new verifier for distributed protocols that significantly outperforms the state-of-the-art, scales orders of magnitude faster, and robustly derives compact inductive invariants fully automatically.

\end{abstract}

\section{Introduction}
\label{sec:introduction}
Our focus in this paper is on \textit{parameterized verification}, specifically proving \textit{safety} properties of distributed systems, such as protocols that are often modeled above the code level (e.g.,~\cite{lamport2002specifying,padon2016ivy}), consisting of arbitrary numbers of \textit{identical} components that are instances of a small set of different \textit{sorts}. For example, a client server protocol\cite{ivyclientserver} $CS(i,j)$ is a two-sort parameterized system with parameters $i \geq 1$ and $j \geq 1$ denoting, respectively, the number of clients and servers. 
Protocol correctness proofs are critical for establishing the correctness of actual system implementations in established methodologies such as~\cite{hawblitzel2015ironfleet,Wilcox2015Verdi}.
Proving safety properties for such systems requires the derivation of inductive invariants that are expressed as state predicates quantified over the system parameters.   
While, in general, this problem is undecidable~\cite{apt1986limits}, certain restricted forms have been shown to yield to algorithmic solutions~\cite{bloem2015decidability}. Key to these solutions is appealing to the problem's inherent symmetry. In this paper, we exclusively focus on protocols whose sorts represent sets of indistinguishable domain constants. The behavior of this restricted class of protocols remains invariant under all possible permutations of the domain constants. We leave the exploration of other features, such as totally-ordered sorts, integer arithmetic, etc., for future work.

Our proposed  symmetry-based solution is best understood by briefly reviewing earlier efforts. Initially, the pressing issue was the inevitable \textit{state explosion} when verifying a finite, but large, parameterized system~\cite{emerson1996symmetry, ip1996better, pong1995new, godefroid1999exploiting, sistla2000smc,  barner2002combining}. Thus, instead of verifying the ``full'' system, these approaches verified its \textit{symmetry-reduced quotient,} mostly using BDD-based symbolic image computation~\cite{burch1990symbolic,burch1992symbolic,mcmillan1993symbolic}. The Mur$\varphi$ verifier~\cite{ip1996better} was a notable exception in that it a)  generated a C++ program that enumerated the system's symmetry-reduced reachable states, and b) allowed for the verification of unbounded systems by taking advantage of \textit{data saturation} which happens when the size of the symmetry-reduced reachable states become constant regardless of system size. 

The idea that an unbounded \textit{symmetric} system can, under certain data-independence assumptions,  be verified by analyzing small finite instances evolved into the approach of verification by \textit{invisible invariants}~\cite{pnueli2001automatic, arons2001parameterized, zuck2004model,balaban2005iiv,dooley2016proving}. In this approach, assuming they exist, inductive invariants that are universally-quantified over the system parameters are automatically derived by analyzing instances of the system up to a \textit{cutoff} size $N_0$ using a combination of symbolic reachability and symmetry-based abstraction.
Noting that an invariant is an over-approximation of the reachable states, the restriction to universal quantification may fail in some cases, rendering the approach incomplete. The invisible invariant verifier IIV~\cite{balaban2005iiv} employs some heuristics to derive invariants that use combinations of universal and existential  quantifiers, but as pointed out in \cite{namjoshi2007symmetry}, it may still fail and is not guaranteed to be complete. 

The development of SAT-based incremental induction algorithms~\cite{bradley2011sat,een2011efficient} for verifying the safety of finite transition systems was a major advance in the field of model checking and has, for the most part, replaced BDD-based approaches. These algorithms leverage the capacity and performance of modern CDCL SAT solvers~\cite{marques1999grasp,moskewicz2001chaff,een2003extensible,balyo2020proceedings} to  produce \textit{clausal strengthening assertions} $A$ that, conjoined with a specified safety property $P$, form  an automatically-generated inductive invariant $Inv = A \wedge P$ if the property holds. The AVR hardware verifier~\cite{goel2020avr,goel2019model,goel2019empirical} was adapted in~\cite{ma2019i4} to produce quantifier-free inductive invariants for small instances of unbounded protocols that are subsequently generalized with universal quantification, in analogy with the invisible invariants approach,  to arbitrary sizes. The resulting assertions tended, in some cases, to be quite large, and the approach was also incomplete due to the restriction to universal quantification.

In this paper we introduce \icpo, a novel symmetry-based verifier that builds on these previous efforts while removing most of their limitations. Rather than search for an invariant with a prescribed quantifier prefix, \icpo constructively \textit{discovers} the required quantified assertions by performing \textit{symmetric incremental induction} and analyzing the symmetry patterns in learned clauses to infer the corresponding quantifier prefix.
Our main contributions are:
\begin{itemize}
    \item An extension to finite incremental induction algorithms that uses protocol symmetry to boost clause learning from a \textit{single} clause $\varphi$ to a set of symmetrically-equivalent clauses, $\varphi$'s \textit{orbit}. 
    \item A quantifier inference procedure that expresses $\varphi$'s orbit by an automatically-derived \textit{compact} quantified predicate $\Phi$. The inference procedure is based on a simple analysis of $\varphi$'s \textit{syntactic structure} and yields a quantified form with both universal and existential quantifiers.    
    \item A systematic \textit{finite convergence} procedure for determining a minimal instance size sufficient for deriving a quantified inductive invariant that holds for all sizes.
\end{itemize}
We also demonstrate the effectiveness of \icpo~on a diverse set of benchmarks and show that it significantly advances the current state-of-the-art.

The paper is structured as follows:
\S\ref{sec:notation} presents preliminaries. \S\ref{sec:protocols} formalizes protocol symmetries.
The next three sections detail our key contributions: symmetry boosting during incremental induction in \S\ref{sec:symclause}, relating symmetry to quantification in \S\ref{sec:quantinfer}, and checking for convergence in \S\ref{sec:invcheck}.
\S\ref{sec:ic3po_highlevel} describes the \icpo algorithm and implementation details. \S\ref{sec:evaluation} presents our experimental evaluation. The paper concludes with a brief survey of related work in \S\ref{sec:related}, and a discussion of future directions in \S\ref{sec:conclusions}.

\section{Preliminaries}
\label{sec:notation}
\noindent Figure~\ref{fig:tla_naive} describes a toy consensus protocol from~\cite{toyconsensus} in the TLA+ language~\cite{lamport2002specifying}.\footnote{The description in~\cite{toyconsensus} is in the Ivy~\cite{padon2016ivy} language and encodes set operations in relational form with a $member$ relation representing $\in$.} The protocol has three named sorts $S = [ \fin{node}, \fin{quorum}, \fin{value} ]$ introduced by the \textsc{constants} declaration, and two relations $R = \{vote, decision\}$, introduced by the \textsc{variables} declaration, that are defined on these sorts. Each of the sorts is understood to represent an unbounded domain of distinct elements with the relations serving as the protocol's state variables. The global axiom (line 3) defines the elements of the \fin{quorum} sort to be subsets of the \fin{node} sort and restricts them further by requiring them to be pair-wise non-disjoint. We will refer to \fin{node} (resp. \fin{quorum}) as an \textit{independent} (resp. \textit{dependent}) sort. The protocol transitions are specified by the actions $CastVote$ and $Decide$ (lines 6-7) which are expressed using the current- and next-state  variables as well as the definitions $didNotVote$ and $chosenAt$ (lines 4-5) which serve as \textit{auxiliary non-state} variables.
Lines 8-10 specify the protocol's initial states, transition relation, and safety property.
\begin{algorithm}[!tb]
\batchmode 
\tlatex
\setstretch{0.85}
\tlasize
\setboolean{shading}{true}
\@x{\makebox[5pt][r]{\scriptsize \hspace{1em}}}\moduleLeftDash\@xx{ {\MODULE} \text{\naiveconsensus}}\moduleRightDash\@xx{\makebox[5pt][r]{\scriptsize \hspace{1em}}}%
\@x{\makebox[25pt][r]{\scriptsize 1\hspace{1.3em}} {\CONSTANTS} \fin{node} ,\, \fin{quorum} ,\, \fin{value} \@s{60} {\VARIABLES} vote ,\, decision}%
\@pvspace{5.0pt}%
 \@x{\makebox[25pt][r]{\scriptsize 2\hspace{1.3em}} vote\@s{2} \.{\in} ( \fin{node} \times \fin{value}) \.{\rightarrow}
 {\BOOLEAN} \@s{47} decision\@s{2} \.{\in} \fin{value} \.{\rightarrow} {\BOOLEAN} }%
\@pvspace{5.0pt}%
 \@x{\makebox[25pt][r]{\scriptsize 3\hspace{1.3em}} {\ASSUME}\@s{7} \A\, Q \.{\in} \fin{quorum} \.{:} Q \subseteq \fin{node} \@s{4} \.{\land}  \@s{4} \A\, Q_1 ,\, Q_2 \.{\in} \fin{quorum} \.{:} Q_1 \.{\cap} Q_2 \.{\neq} \{ \} } %
\@pvspace{5.0pt}%
\@x{\makebox[25pt][r]{\scriptsize 4\hspace{1.3em}} didNotVote( n ) \@s{4} \.{\defeq} \.\A\, V \.{\in} \fin{value} \.{:} {\neg} vote ( n ,\, V )}
\@pvspace{5.0pt}%
\@x{\makebox[25pt][r]{\scriptsize 5\hspace{1.3em}} chosenAt( q ,\, v ) \@s{3} \.{\defeq} \.\A\, N \.{\in} q \.{:} vote ( N ,\, v )} %
\@pvspace{5.0pt}%
 \@x{\makebox[25pt][r]{\scriptsize 6\hspace{1.3em}} CastVote ( n ,\, v ) \@s{2} \.{\defeq} \@s{2} \.didNotVote( n ) \@s{4} \.{\land} \@s{4} \.vote \.{'} \.{=} [ vote {\EXCEPT} {\bang} [ n ,\, v
 ] \.{=} {\TRUE} ]}%
\@x{\makebox[25pt][r]{\scriptsize \hspace{1.3em}} \@s{136} \.{\land} \@s{4} \.{\UNCHANGED} decision}%
\@pvspace{5.0pt}%
 \@x{\makebox[25pt][r]{\scriptsize 7\hspace{1.3em}} Decide ( q ,\, v ) \@s{12} \.{\defeq} \@s{2} \.chosenAt( q ,\, v ) \@s{4} \.{\land} \@s{4} \.decision \.{'} \.{=} [ decision {\EXCEPT} {\bang} [
 v ] \.{=} {\TRUE} ]}%
 \@x{\makebox[25pt][r]{\scriptsize \hspace{1.3em}}   \@s{137.5}\.{\land} \@s{4} \.{\UNCHANGED} vote}%
\@pvspace{5.0pt}%
\@x{\makebox[25pt][r]{\scriptsize 8\hspace{1.3em}} Init \@s{2} \.{\defeq} \@s{2} \.\A\, N\@s{0.19} \.{\in} \fin{node} ,\, V \.{\in} \fin{value} \.{:} {\neg} vote ( N ,\, V ) \@s{4} \.{\land} \@s{4} \.\A\, V \.{\in} \fin{value} \.{:} {\neg} decision ( V )} %
\@pvspace{5.0pt}%
\@x{\makebox[25pt][r]{\scriptsize 9\hspace{1.3em}} T \@s{9} \.{\defeq} \@s{2} \.\E\, N {\in} \fin{node} , Q {\in} \fin{quorum}, V {\in} \fin{value} \,{:}\, CastVote(N ,\, V) \.{\lor} Decide(Q ,\, V)}%
\@pvspace{5.0pt}%
 \@x{\makebox[25pt][r]{\scriptsize 10\hspace{1.3em}} P \@s{9} \.{\defeq} \@s{2} \.\A\, V_1 ,\, V_2 \.{\in} \fin{value} \.{:} decision ( V_1 )
 \.{\land} decision ( V_2 ) \.{\implies} V_1 \.{=} V_2}%
\@x{\makebox[5pt][r]{\scriptsize \hspace{1em}}}\bottombar\@xx{\makebox[5pt][r]{\scriptsize \hspace{1em}}}%
\captionsetup{belowskip=-10pt}
\captionof{figure}{Toy consensus protocol in the TLA+ language}
\label{fig:tla_naive}
\end{algorithm}%

Viewed as a parameterized system, the \textit{template} of an arbitrary $\fin{n}$-sort distributed protocol $\pro$ will be expressed as $\pro(\fin{s_1,\dots,s_n})$ where $S = [\fin{s_1,\dots,s_n}]$ is an ordered list of its sorts, each of which is assumed to be an unbounded uninterpreted set of distinct \textit{constants}. As a mathematical transition system, $\pro$ is defined by a) its state variables which are expressed as $k$-ary relations on its sorts, and b) its actions which capture its state transitions. We also note that non-Boolean functions/variables can be easily accommodated by encoding them in relational form, e.g.,  $f(\fin{x_1}, \fin{x_2}, \dots) = \fin{y}$. We will use $Init, T,$ and $P$ to denote, respectively, a protocol's initial states, its transition relation, and a safety property that is required to hold on all reachable states.  A finite instance of $\pro$ will be denoted as $\pro(\fin{|s_1|,\dots,|s_n|})$ where each named sort is replaced by its finite size in the instance. Similarly, $Init(\fin{|s_1|,\dots,|s_n|})$, $T(\fin{|s_1|,\dots,|s_n|})$ and $P(\fin{|s_1|,\dots,|s_n|})$ will, respectively, denote the application of $Init$, $T$ and $P$ to this finite instance.

The template of the protocol in Figure~\ref{fig:tla_naive} is $ToyConsensus(\fin{node,quorum,value})$. Its finite instance:
{
\eqsize
\begin{align}
\naiveinstance : \hspace{10pt} &\fin{node_3} \triangleq \{ \fin{n_1}, \fin{n_2}, \fin{n_3} \}  \hspace{30pt} \fin{value_3} \triangleq \{ \fin{v_1, v_2, v_3} \} \label{eq:ex_size}\\
& \fin{quorum_3} \triangleq \{ \fin{q_{12}\!:\!\{n_1,n_2\},~q_{13}\!:\!\{n_1,n_3\},~q_{23}\!:\!\{n_2,n_3\}} \} \nonumber
\end{align}
}%
will be used as a running example in the paper.
The finite sorts of this instance are defined as sets of arbitrarily-named distinct constants. It should be noted that the constants of the $\fin{quorum_3}$ sort are subsets of the $\fin{node_3}$ sort that satisfy the non-empty intersection axiom and are named to reflect their symmetric dependence on the $\fin{node_3}$ sort.
This instance has 9 $vote$ and 3 $decision$ state variables, and a $state$ of this instance corresponds to a complete Boolean assignment to these 12 state variables. 

In the sequel, we will use $\fpro$ and $\hat{T}$ as shorthand for $\pro(\fin{|s_1|, \dots, |s_n|})$ and  $T(\fin{|s_1|,\dots,|s_n|})$.
Quantifier-free formulas will be denoted by lower-case Greek letters (e.g., $\varphi$) and quantified formulas by upper-case Greek letters (e.g., $\Phi$).
We use primes (e.g., $\varphi'$) to represent a formula after a single transition step.

\section{Protocol Symmetries}
\label{sec:protocols}
\noindent
The symmetry group of $\fpro$ is $G(\fpro) = \bigtimes_{~\s \in S} Sym(\s)$,
where $Sym(\s)$ is the symmetric group, i.e., the set of $|\s|!$ permutations of the constants of the set $\s$.\footnote{We assume familiarity with basic notions from \textit{group theory} including \textit{permutation groups}, \textit{cycle notation}, \textit{group action} on a set, \textit{orbits}, etc.,  which can be readily found in standard textbooks on Abstract Algebra~\cite{Fraleigh00}.} 
In what follows we will use $G$ instead of $G(\fpro)$ to reduce clutter.
Given a permutation $\gamma \in G$ and an arbitrary protocol relation $\rho$ instantiated with specific sort constants, the \emph{action} of $\gamma$ on $\rho$, denoted $\rho^\gamma$, is the relation  obtained from $\rho$ by permuting the sort constants in $\rho$ according to $\gamma$; it is referred to as the $\gamma$-\emph{image} of $\rho$. Permutation $\gamma \in G$ can also act on any formula involving the protocol relations. In particular, the invariance of protocol behavior under permutation of sort constants implies that the action of $\gamma$ on the (finite) initial state, transition relation, and property formulas causes a syntactic re-arrangement of their sub-formulas while preserving their logical equivalence:
{
\eqsize
\begin{equation}
\hfill
\hat{Init}^{\gamma} \same \hat{Init}
\qquad \qquad \qquad
\hat{T}^{\gamma} \same \hat{T}
\qquad \qquad \qquad
\hat{P}^{\gamma} \same \hat{P}
\hfill
\label{eqn:ITP_invariance}
\end{equation}}%

Consider next a clause $\varphi$ which is a disjunction of literals, namely, instantiated protocol relations or their negations.
The \emph{orbit} of $\varphi$ under $G$, denoted $\varphi^G$, is the set of its images $\varphi^\gamma$ for all permutations $\gamma \in G$, i.e., ${\varphi ^G} = \left\{ {\left. {{\varphi ^\gamma }} \right|\gamma  \in G} \right\}$.
The $\gamma$-image of a clause can be viewed as a \emph{syntactic} transformation that will either yield a new logically-distinct clause on different literals or simply re-arrange the literals in the clause without changing its logical behavior (by the commutativity and associativity of disjunction). We define the \emph{logical action} of a permutation $\gamma$ on a clause $\varphi$, denoted $\varphi^{L(\gamma)}$, as:
{
\vspace{-1pt}
\eqsize
$${\varphi ^{L(\gamma )}} = \left\{ {\begin{array}{*{20}{l}}
  {{\varphi ^\gamma }}&{{\text{if }}{\varphi ^\gamma } \not \same \varphi } \\ 
  \varphi &{{\text{if }}{\varphi ^\gamma } \same \varphi } 
\end{array}} \right.$$
}%
and the \emph{logical orbit} of $\varphi$ as ${\varphi ^{L(G)}} = \left\{ {\left. {\varphi ^{L(\gamma )}} \right|{\gamma  \in G}} \right\}$.
With a slight abuse of notation, logical orbit can also be viewed as the conjunction of the logical images:
{
\vspace{-1pt}
\eqsize
\[{\varphi ^{L(G)}} = \mathop  \bigwedge \limits_{\gamma  \in G} {\varphi ^{L(\gamma )}}\]
}%

To illustrate these concepts, consider $\naiveinstance$ from (\ref{eq:ex_size}).
Its symmetries in cycle notation are as follows:
{
\eqsizesmall
\begin{align}
& \begin{aligned}
& Sym(\fin{node_3}) &=~ \{ (), (\fin{n_1~n_2}), (\fin{n_1~n_3}), (\fin{n_2~n_3}), (\fin{n_1~n_2~n_3}), (\fin{n_1~n_3~n_2}) \} \\
& Sym(\fin{value_3}) &=~ \{ (), (\fin{v_1~v_2}), (\fin{v_1~v_3}), (\fin{v_2~v_3}), (\fin{v_1~v_2~v_3}), (\fin{v_1~v_3~v_2}) \}
\end{aligned}  \nonumber\\
& G =~ Sym(\fin{node_3}) \times Sym(\fin{value_3}) \label{eq:ex_sym_group}
\end{align}}%
The symmetry group (\ref{eq:ex_sym_group}) of $\naiveinstance$ has $36$ symmetries corresponding to the $6$ $\fin{node_3}$ $\times$ $6$ $\fin{value_3}$ permutations. The permutations on $\fin{quorum_3}$ are \textit{implicit} and based on the permutations of $\fin{node_3}$ since $\fin{quorum_3}$ is a dependent sort. Now, consider the example clause:
{
\vspace{-1pt}
\eqsize
\begin{align}
\varphi_1 =& ~vote(\fin{n_1, v_1}) \vee vote(\fin{n_1, v_2}) \vee vote(\fin{n_1, v_3}) \label{eq:ex_clause1}
\end{align}}%
The orbit of $\varphi_1$ consists of 36 syntactically-permuted clauses. However, many of these images are logically equivalent yielding the following logical orbit of just $3$ logically-distinct clauses:
{
\vspace{-1pt}
\eqsizesmall
\begin{align}
\varphi_1^{L(G)} =& ~[~ vote(\fin{n_1, v_1}) \vee vote(\fin{n_1, v_2}) \vee vote(\fin{n_1, v_3}) ~] ~\wedge \nonumber\\ 
& ~[~ vote(\fin{n_2, v_1}) \vee vote(\fin{n_2, v_2}) \vee vote(\fin{n_2, v_3}) ~] ~\wedge \nonumber\\
& ~[~ vote(\fin{n_3, v_1}) \vee vote(\fin{n_3, v_2}) \vee vote(\fin{n_3, v_3}) ~]
\label{eq:ex_clause1_orbit}
\end{align}}%
\section{\symic: Symmetric Incremental Induction}
\label{sec:symclause}
\symic~is an extension of the standard IC3 algorithm~\cite{bradley2011sat,een2011efficient} that takes advantage of the symmetries in a finite instance $\fpro$ of an unbounded protocol $\pro$ to \textit{boost learning} during backward reachability. Specifically, it refines the current frame, in a \textit{single} step, with \emph{all} clauses in the logical orbit $\varphi^{L(G)}$ of a newly-learned quantifier-free clause $\varphi$. In other words, having determined that the backward 1-step check $F_{i-1} \wedge \hat{T} \wedge [\neg \varphi]'$ is unsatisfiable (i.e., that states in cube $\neg \varphi$ in frame $F_i$ are unreachable from the previous frame $F_{i-1}$), \symic~refines $F_i$ with $\varphi^{L(G)}$, i.e., $F_i := F_i \wedge \varphi^{L(G)}$, rather than with just $\varphi$. Thus, at each refinement step, \symic~not only blocks cube $\neg \varphi$, but also all symmetrically-equivalent cubes $[\neg \varphi]^{\gamma}$ for all $\gamma \in G$.
This simple change to the standard incremental induction algorithm significantly improves performance since the extra clauses used 
to refine $F_i$ a) are derived \textit{without} making additional backward 1-step queries, and b) provide stronger refinement in each step of backward reachability leading to faster convergence with fewer counterexamples-to-induction (CTIs). The proof of correctness of symmetry boosting can be found in~\ref{sec:proof_symboost}.

\section{Quantifier Inference}
\label{sec:quantinfer}
The key insight underlying our overall approach is that the explicit logical orbit, in a finite protocol instance, of a learned clause $\varphi$  can be exactly, and systematically, captured by a corresponding quantified predicate $\Phi$. In retrospect, this should not be surprising since symmetry and quantification can be seen as different ways of expressing invariance under permutation of the sort constants in the clause. 
To motivate the connection between symmetry and quantification, consider the following quantifier-free clause from our running example and a proposed quantified predicate that \textit{implicitly} represents its logical orbit:
{
\eqsize
\begin{align}
\varphi_2 =&~ \neg decision(\fin{v_1}) \vee decision(\fin{v_2}) \nonumber \\
\Phi_2 =& ~ \forall X_1, X_2 \in \fin{value_3}:~ (\text{distinct}~X_1~X_2) \to [~ \neg decision(X_1) \vee decision(X_2) ~] \label{eq:ex_quant_clause2}
\end{align}}%
\noindent 
As shown in Table~\ref{tab:quantinfer}, the logical orbit $\varphi_2^{L(G)}$ consists of 6 logically-distinct clauses corresponding to the 6 permutations of the 3 constants of the $\fin{value_3}$ sort. Evaluating $\Phi_2$ by substituting all $3 \times 3=9$ assignments to the variable pair $(X_1, X_2) \in \fin{value_3 \times value_3}$ yields 9 clauses, 3 of which (shown \textcolor{nonecolor}{faded}) are trivially true since their ``distinct'' antecedents are false, with the remaining $6$ corresponding to each of the clauses obtained through permutations of the 3 $\fin{value_3}$ constants.
\begin{table}[!bt]
\begin{center}
\setlength\tabcolsep{8pt}
\tlasize
\resizebox{0.9\textwidth}{!}{
\begin{tabular}{c|c|c}
    \hline
    $(X_1, X_2)$ & Instantiation of $\Phi_2$ & Permutation \\
    \hline
    $(\fin{v_1,v_1})$ & $\text{\textcolor{nonecolor}{$(\text{distinct}~\fin{v_1~v_1}) \to [~ \mathcolorbox{white}{\neg decision(\fin{v_1}) \vee decision(\fin{v_1})}$}} ~]$ &  \textcolor{nonecolor}{none} \\
    $(\fin{v_1,v_2})$ & $(\text{distinct}~\fin{v_1~v_2}) \to [~ \text{$\mathcolorbox{hlcolor}{\neg decision(\fin{v_1}) \vee decision(\fin{v_2})}$} ~]$ &  $()$ \\
    $(\fin{v_1,v_3})$ & $(\text{distinct}~\fin{v_1~v_3}) \to [~ \text{$\mathcolorbox{hlcolor}{\neg decision(\fin{v_1}) \vee decision(\fin{v_3})}$} ~]$ &  $(\fin{v_2~v_3})$ \\
    $(\fin{v_2,v_1})$ & $(\text{distinct}~\fin{v_2~v_1}) \to [~ \text{$\mathcolorbox{hlcolor}{\neg decision(\fin{v_2}) \vee decision(\fin{v_1})}$} ~]$ & $(\fin{v_1~v_2})$  \\
    $(\fin{v_2,v_2})$ & $\text{\textcolor{nonecolor}{$(\text{distinct}~\fin{v_2~v_2}) \to [~ \mathcolorbox{white}{\neg decision(\fin{v_2}) \vee decision(\fin{v_2})}$}} ~]$ &  \textcolor{nonecolor}{none} \\
    $(\fin{v_2,v_3})$ & $(\text{distinct}~\fin{v_2~v_3}) \to [~ \text{$\mathcolorbox{hlcolor}{\neg decision(\fin{v_2}) \vee decision(\fin{v_3})}$} ~]$ &  $(\fin{v_1~v_2~v_3})$ \\
    $(\fin{v_3,v_1})$ & $(\text{distinct}~\fin{v_3~v_1}) \to [~ \text{$\mathcolorbox{hlcolor}{\neg decision(\fin{v_3}) \vee decision(\fin{v_1})}$} ~]$ & $(\fin{v_1~v_3~v_2})$  \\
    $(\fin{v_3,v_2})$ & $(\text{distinct}~\fin{v_3~v_2}) \to [~ \text{$\mathcolorbox{hlcolor}{\neg decision(\fin{v_3}) \vee decision(\fin{v_2})}$} ~]$ &  $(\fin{v_1~v_3})$ \\
    $(\fin{v_3,v_3})$ & $\text{\textcolor{nonecolor}{$(\text{distinct}~\fin{v_3~v_3}) \to [~ \mathcolorbox{white}{\neg decision(\fin{v_3}) \vee decision(\fin{v_3})}$}} ~]$ &  \textcolor{nonecolor}{none} \\
	\hline
\end{tabular}
}
\captionsetup{justification=centering,font=small}
\caption{Correlation between symmetry and quantification for $\Phi_2$ from (\ref{eq:ex_quant_clause2}) \\ 
\colorbox{hlcolor}{Highlighted} clauses represent the logical orbit $\varphi_2^{L(G)}$ \\
\textcolor{nonecolor}{none} indicates the clause has no corresponding permutation $\gamma \in Sym(\fin{value_3})$}
\label{tab:quantinfer}
\end{center}
\end{table}
Similarly, we can show that the 3-clause logical orbit $\varphi_1^{L(G)}$ in  (\ref{eq:ex_clause1_orbit}) can be succinctly expressed by the quantified predicate:
{
\eqsize
\begin{equation}
\Phi_1 =~ \forall Y \in \fin{node_3},~\exists X \in \fin{value_3}:~ vote(Y, X) \label{eq:ex_quant_clause1}
\end{equation}}%
which employs universal \textit{and} existential quantification.
And, finally, $\varphi_3$ and $\Phi_3$ below illustrate how a clause whose logical orbit is just itself can also be expressed as an existentially-quantified predicate.
{
\eqsize
\begin{align}
\varphi_3 =&~ decision(\fin{v_1}) \vee decision(\fin{v_2}) \vee decision(\fin{v_3}) \nonumber\\
\Phi_3 =&~ \exists~X \in \fin{value_3}:~ decision(X) \label{eq:ex_quant_clause3_new}
\end{align}}%
We will first describe basic quantifier inference for protocols with independent sorts. This is done by analyzing the syntactic structure of each quantifier-free clause learned during incremental induction to derive a quantified form that expresses the clause's logical orbit. We later discuss extensions to this approach that consider protocols with dependent sorts, such as $ToyConsensus$, for which the basic single-clause quantifier inference may be insufficient. 

\subsection{Basic Quantifier Inference}
\label{sec:quantinfer_main}
Given a quantifier-free clause $\varphi$, quantifier inference seeks to derive a \textit{compact} quantified predicate that \textit{implicitly} represents, rather than explicitly enumerates, its logical orbit.
The procedure must satisfy the following conditions:
\begin{itemize}[leftmargin=15pt]
    \item[] \textit{Correctness} -- The inferred quantified predicate $\Phi$ should be logically-equivalent to the explicit logical orbit $\varphi^{L(G)}$.
    \item[] \textit{Compactness} -- The number of quantified variables in $\Phi$ for each sort $\s \in S$ should be independent of the sort size $|\s|$. Intuitively, this condition ensures that the size of the quantified predicate, measured as the number of its quantifiers, remains bounded for \textit{any} finite protocol instance, and more importantly, for the unbounded protocol.
\end{itemize}
\noindent \symic~constructs the orbit's quantified representation by a) inferring the required quantifiers for each sort separately, and b) stitching together the inferred quantifiers for the different sorts to form the final result.
The key to capturing the logical orbit and deriving its compact quantified representation is a simple analysis of the \emph{structural distribution} of each sort's constants in the target clause.
Let $\pi(\varphi, \s)$ be a partition of the constants of sort $\s$ in $\varphi$ based on whether or not they appear \textit{identically} in the literals of $\varphi$. Two constants $\fin{c_i}$ and $\fin{c_j}$ are identically-present in $\varphi$ if they occur in $\varphi$ and swapping them results in a logically-equivalent clause, i.e., $\varphi^{(\fin{c_i~c_j})} \same \varphi$.
Let $\numc(\varphi, \s)$ be the number of constants of $\s$ that appear in $\varphi$, and let $|\pi(\varphi, \s)|$ be the number of classes/cells in $\pi(\varphi, \s)$.
Consider the following scenarios for quantifier inference on sort $\s$: 

\vspace{3pt}
\begin{enumerate}[left= -2pt, label=A. ]
    \item \textbf{$\numc(\varphi, \s) < |\s|$ (infer $\forall$)}
\end{enumerate}
\noindent
In this case, clause $\varphi$ contains a strict subset of constants from sort $\s$, indicating that the number of literals in $\varphi$ parameterized by $\s$ constants is \textit{independent} of the sort size $|\s|$. Increasing sort size simply makes the orbit \textit{longer} by adding more symmetrically-equivalent but logically-distinct clauses.
An example of this case is $\varphi_2$ and $\Phi_2$ in (\ref{eq:ex_quant_clause2}). The quantified predicate representing such an orbit requires \textbf{$\numc(\varphi, \s)$} universally-quantified sort variables corresponding to the \textbf{$\numc(\varphi, \s)$} sort constants in the clause, and expresses the orbit as an implication whose antecedent is a ``distinct'' constraint that ensures that the variables cannot be instantiated with identical constants.

\vspace{3pt}
\begin{enumerate}[left= -2pt, label=B. ]
    \item \textbf{$\numc(\varphi, \s) = |\s|$}
\end{enumerate}
\noindent
When all constants of a sort $\s$ appear in a clause, the above universal quantification yields a predicate with $|\s|$ quantified variables and fails the compactness requirement since the number of quantified variables becomes unbounded as the sort size increases. Correct quantification in this case must be inferred by examining the partition of the sort constants in the clause.

\vspace{3pt}
\begin{enumerate}[leftmargin=0pt, label= ]
    \item I. \textbf{Single-cell Partition i.e., $|\pi(\varphi, \s)| = 1$ (infer $\exists$)} 
\end{enumerate}
When all sort constants appear \textit{identically} in $\varphi$,  $\pi(\varphi, \s)$ is a unit partition. Applying \textit{any} permutation $\gamma \in Sym(\s)$ to $\varphi$ yields a logically-equivalent clause, i.e., the logical orbit in this case is just a single clause. Increasing the size of sort $\s$ simply yields a \textit{wider} clause and suggests that such an orbit can be encoded as a predicate with a single existentially-quantified variable that ranges over all the sort constants.
For example, the partition of the $\fin{value_3}$ sort constants in $\varphi_1$ from (\ref{eq:ex_clause1}) is $\pi(\varphi_1, \fin{value_3}) = \{ \fin{\{v_1,v_2,v_3}\}\}$ since all three constants appear identically in $\varphi_1$. 
The orbit of this clause is just itself and can be encoded as:
{
\eqsize
$$\Phi_1(\fin{value_3}) =~ \exists X \in \fin{value_3}:~ vote(\fin{n_1}, X)$$
}%
Also, since $\numc(\varphi_1, \fin{node_3}) < \fin{|node_3|}$, universal quantification (as in Section~\ref{sec:quantinfer_main}.A) correctly captures the dependence of the clause's logical orbit on the $\fin{node_3}$ sort  to get the overall quantified predicate $\Phi_1$ in (\ref{eq:ex_quant_clause1}).

\vspace{3pt}
\begin{enumerate}[leftmargin=0pt, label= ]
    \item II. \textbf{Multi-cell Partition  i.e., $|\pi(\varphi, \s)| > 1$ (infer $\forall\exists$)}
\end{enumerate}
In this case, a fixed number of the constants of sort $\s$ appear differently in $\varphi$ with the remaining constants appearing identically, resulting  in a multi-cell partition. 
Specifically, assume that a number $0 < k < |\s|$ exists that is independent of  $|\s|$ such that  $\pi(\varphi, \s)$ has $k+1$ cells in which one cell has $|\s| - k$ identically-appearing constants and each of the remaining $k$ cells contains one  of the differently-appearing constants.  It can be shown that the logical orbit in this case can be expressed by a quantified predicate with $k$
universal quantifiers and a single existential quantifier.
For example, the partition of the $\fin{value_3}$ constants in the clause:
{
\eqsize
$$\varphi_4 = ~ \neg decision(\fin{v_1}) \vee decision(\fin{v_2}) \vee decision(\fin{v_3})$$
}%
is $\pi(\varphi_4, \fin{value_3}) = \{ \fin{ \{v_1\}, \{v_2, v_3\}} \}$ since $\fin{v_1}$ appears differently from $\fin{v_2}$ and $\fin{v_3}$. 
The logical orbit of this clause is:
{
\eqsizesmall
\begin{align}
\varphi_4^{L(G)} =
& ~[~ \neg decision(\fin{v_1}) \vee decision(\fin{v_2}) \vee decision(\fin{v_3}) ~] ~\wedge \nonumber\\ 
& ~[~ \neg decision(\fin{v_2}) \vee decision(\fin{v_1}) \vee decision(\fin{v_3}) ~] ~\wedge \nonumber\\ 
& ~[~ \neg decision(\fin{v_3}) \vee decision(\fin{v_2}) \vee decision(\fin{v_1}) ~] \label{eq:ex_clause3}
\end{align}}%
and can be compactly encoded with an outer universally-quantified variable corresponding to the sort constant in the singleton cell, and an inner existentially-quantified variable corresponding to the other $|\fin{value_3}| - 1$ identically-present sort constants. A ``distinct'' constraint must also be conjoined with the literals involving the existentially-quantified variable to exclude the constant corresponding to the universally-quantified variable from the inner quantification. 
$\varphi_4^{L(G)}$ can thus be shown to be logically-equivalent to:
{
\eqsize
\begin{equation}
\resizebox{0.925\textwidth}{!}{
$\Phi_4 = \forall Y \in \fin{value_3},~\exists X \in \fin{value_3}:~ \neg decision(Y) \vee \left[ (\text{distinct}~Y ~X) \wedge decision(X) \right]$
} \label{eq:ex_quant_clause3}
\end{equation}
}%

\paragraph{Combining Quantifier Inference for Different Sorts---\hspace{10pt}}
The complete quantified predicate $\Phi$ representing the logical orbit of clause $\varphi$ can be obtained by applying the above inference procedure to each sort in $\varphi$ separately and in any order. This is possible since the sorts are assumed to be independent: the constants of one sort do not permute with the constants of a different sort.
This will yield a predicate $\Phi$ that has the quantified prenex form $\forall^*\exists^* <\fin{CNF~expression}>$, where all universals for each sort are collected together and precede all the existential quantifiers.

It is interesting to note that this connection between symmetry and quantification suggests that an orbit can be visualized as a two-dimensional object whose height and width correspond, respectively, to the number of universally- and existentially-quantified variables. A proof of the correctness of this quantifier inference procedure can be found in~\ref{sec:proof_qi}.

\subsection{Quantifier Inference Beyond $\forall^*\exists^*$}
\label{sec:quantinfer_extensions}
We observed that for some protocols, particularly those that have dependent sorts such as \textit{ToyConsensus}, the above inference procedure violates the compactness requirement. In other words, restricting inference to a $\forall^*\exists^*$ quantifier prefix causes the number of quantifiers to become unbounded as sort sizes increase. Recalling that the $\forall^*\exists^*$ pattern is inferred from the symmetries of a \textit{single} clause, whose literals are the protocol's state variables, suggests that inference of more complex quantification patterns may necessitate that we examine the structural distribution of sort constants across \textit{sets of clauses.} While this is an interesting possible direction for further exploration of the connection between symmetry and quantification, an alternative approach is to take advantage of the \textit{formula structure} of the protocol's transition relation. For example, the transition relation of \textit{ToyConsensus} is specified in terms of two quantified sub-formulas, $didNoteVote$ and $chosenAt$, that can be viewed, in analogy with a sequential hardware circuit, as internal auxiliary non-state variables that act as ``combinational'' functions of the state variables. By allowing such auxiliary variables to appear explicitly in clauses learned during incremental induction, the quantified predicates representing the logical orbits of these clauses (according to the basic inference procedure in Section~\ref{sec:quantinfer_main}) will \textit{implicitly} incorporate the quantifiers used in the auxiliary variable definitions and automatically have a quantifier prefix that generalizes the basic $\forall^*\exists^*$ template. 

\paragraph{Revisiting ToyConsensus---\hspace{10pt}}
\noindent
When \symic is run on the finite instance \naiveinstance, it terminates with the following two strengthening assertions:
{
\eqsizesmall
\begin{align}
& \scalebox{0.9}{\text{$A_1 =~ \forall~{N \in \fin{node_3}, V_1, V_2 \in \fin{value_3}}:~(\text{distinct}~V_1~V_2) \to \neg vote(N, V_1) \vee \neg vote(N, V_2)$}} \label{eq:a1}\\
& \scalebox{0.9}{\text{$A_2 =~ \forall~{V \in \fin{value_3}},~\exists~{Q \in \fin{quorum_3}}.~\neg decision(V) \vee chosenAt(Q, V)$}}  \label{eq:a2}\\
& \scalebox{0.9}{\text{$\hphantom{A_2} =~ \forall~{V \in \fin{value_3}},~\exists~{Q \in \fin{quorum_3}}.~\neg decision(V) \vee [~\forall~{N \in Q}: vote(N, V)~]$}} \nonumber
\end{align}
}%
which, together with $\hat{P}$, serve as an inductive invariant proving that $\hat{P}$ holds for this instance. Both assertions are obtained using the basic quantifier inference procedure in Section~\ref{sec:quantinfer_main} that produces a $\forall^*\exists^*$ quantifier prefix in terms of the clause variables. Note, however, that $A_2$ is expressed in terms of the auxiliary variable $chosenAt$. Substituting the definition of $chosenAt$ yields an assertion with a $\forall\exists\forall$ quantifier prefix exclusively in terms of the protocol's state variables.

\section{Finite Convergence Checks}
\label{sec:invcheck}
\noindent Given a safe finite instance $\fpro \triangleq \pro(\fin{|s_1|,\dots,|s_n|})$, let  $Inv_{\lo{|s_1|,\dots,|s_n|}}$ denote the inductive invariant derived by \symic to prove that $\hat{P}$ holds in $\fpro$. What remains is to determine the instance size $\fin{|s_1|, \dots, |s_n|}$ needed so that $Inv_{\lo{|s_1|,\dots,|s_n|}}$ is also an inductive invariant for all sizes.
If the instance size is too small, $\fpro$ may not include all protocol behaviors and $Inv_{\lo{|s_1|,\dots,|s_n|}}$ will not be inductive at larger sizes.
As shown in the invisible invariant approach~\cite{pnueli2001automatic, arons2001parameterized, zuck2004model,balaban2005iiv,namjoshi2007symmetry}, increasing the instance size becomes necessary to include new protocol behaviors missing in $\fpro$, until protocol behaviors \textit{saturate}.
We propose an \textit{automatic} way to update the instance size and reach saturation by starting with an initial $\szinit$ and iteratively increasing the size until \textit{finite convergence} is achieved.

The initial base size can be chosen to be any non-trivial instance size and can be easily determined by a simple analysis of the protocol description. For example, any non-trivial instance of the \naiveconsensus protocol should have $|\fin{node}| \geq 3$, $|\fin{quorum}| \geq 3$, and $|\fin{value}| \geq 2$.

Our finite convergence procedure can be seen as an integration of symmetry saturation and a stripped-down form of multi-dimensional mathematical induction, and has similarities with previous works on structural induction~\cite{kurshan1989structural,german1992reasoning} and proof convergence~\cite{dooley2016proving}. To determine if $Inv_{\lo{|s_1|,\dots,|s_n|}}$ is inductive for any size, the procedure performs the following checks for $1 \leq i \leq \fin{n}$:
{
\eqsize
\begin{align}
& \scalebox{0.89}{\text{a) $Init(\fin{|s_1|}..\text{\hlc{$\fin{|s_i|+1}$}}..\fin{|s_n|}) \to Inv_{\lo{|s_1|,\dots,|s_n|}}(\fin{|s_1|}..\text{\hlc{$\fin{|s_i|+1}$}}..\fin{|s_n|})$}} \label{eq:induct_check1}\\
& \scalebox{0.89}{\text{b) $Inv_{\lo{|s_1|,\dots,|s_n|}}(\fin{|s_1|}..\text{\hlc{$\fin{|s_i|+1}$}}..\fin{|s_n|}) \wedge T(\fin{|s_1|}..\text{\hlc{$\fin{|s_i|+1}$}}..\fin{|s_n|}) \to Inv'_{\lo{|s_1|,\dots,|s_n|}}(\fin{|s_1|}..\text{\hlc{$\fin{|s_i|+1}$}}..\fin{|s_n|})$}} \label{eq:induct_check2}
\end{align}}%
where $Inv_{\lo{|s_1|,\dots,|s_n|}}(\fin{|s_1|}..\text{\hlc{$\fin{|s_i|+1}$}}..\fin{|s_n|})$ denotes the application of $Inv_{\lo{|s_1|,\dots,|s_n|}}$ to an instance in which the size of sort $\fin{s_i}$ is increased by 1 while the sizes of the other sorts are unchanged.\footnote{Sort dependencies, if any, should be considered when increasing a sort size.}

If all of these checks pass, we can conclude that $Inv_{\lo{|s_1|,\dots,|s_n|}}$ is not specific to the instance size used to derive it and that we have reached \textit{cutoff}, i.e., that $Inv_{\lo{|s_1|,\dots,|s_n|}}$ is an inductive invariant for \textit{any} size.
Intuitively, this suggests that adding a new protocol component (e.g., client, server, node, proposer, acceptor) does not add any unseen unique behavior, and hence proving safety till the cutoff is sufficient to prove safety for any instance size.
While we believe these checks are sufficient, we still do not have a formal convergence proof. In our implementation, we confirm convergence by performing the unbounded induction checks a) $Init \to Inv_{\lo{|s_1|,\dots,|s_n|}}$, and b) $Inv_{\lo{|s_1|,\dots,|s_n|}} \wedge T \to Inv_{\lo{|s_1|,\dots,|s_n|}}'$ noting that they may lie outside the decidable fragment of first-order logic.

On the other hand, failure of these checks, say for sort $\fin{s_i}$, implies that $Inv_{\lo{|s_1|\dots|s_n|}}$ will fail for larger sizes and cannot be inductive in the unbounded case, and we need to repeat \symic on a finite instance with an increased size for sort $\fin{s_i}$, i.e., $\fpro_{new} \triangleq \pro(\fin{|s_1|},..,\text{\hlc{$\fin{|s_i|+1}$}},..,\fin{|s_n|})$, to include new protocol behaviors that are missing in $\fpro$.

Recall from (\ref{eq:a1}) and (\ref{eq:a2}), running \symic on $ToyConsensus(3,3,3)$ produces $Inv_{\lo{3,3,3}} = A_1 \wedge A_2 \wedge \hat{P}$. $Inv_{\lo{3,3,3}}$ passes checks (\ref{eq:induct_check1}) and (\ref{eq:induct_check2}) for instances $ToyConsensus(4,4,3)$ and $ToyConsensus(3,3,4)$, indicating finite convergence.\footnote{Since $\fin{quorum}$ is a dependent sort on $\fin{node}$, it is increased together with the $\fin{node}$ sort.} $Inv_{\lo{3,3,3}}$ passes standard induction checks in the unbounded domain as well, establishing it as a proof certificate that proves the property as safe in $ToyConsensus$.

\section{\icpo: IC3 for Proving Protocol Properties}
\label{sec:ic3po_highlevel}
\noindent
Given a protocol specification $\pro$, \icpo iteratively invokes \symic~on finite instances of increasing size, starting with a given initial base size. Upon termination, \icpo~either a) reaches convergence on an inductive invariant $Inv_{\lo{|s_1|,\dots,|s_n|}}$ that proves  $P$ for the unbounded protocol $\pro$, or b) produces a counterexample trace $Cex_{\lo{|s_1|,\dots,|s_n|}}$ that serves as a finite witness to its violation in both the finite instance and the unbounded protocol. The detailed pseudo code of \icpo is available in~\ref{app:algo}.

We also explored a number of simple enhancements to \icpo, such as strengthening the inferred quantified predicates whenever safely possible to do during incremental induction by a) dropping the ``distinct'' antecedent, and b) rearranging the quantifiers if the strengthened predicate is still unreachable from the previous frame. We describe these enhancements in~\ref{sec:opt}. The results presented in this paper were obtained without these enhancements.

\paragraph{Implementation---\hspace{10pt}}
\noindent
Our implementation of \icpo is publicly available at~\url{https://github.com/aman-goel/ic3po}. The implementation accepts protocol descriptions in the Ivy language~\cite{padon2016ivy} and uses the Ivy compiler to extract a quantified, logical formulation $\pro$ in a customized VMT~\cite{vmt} format.
We use a modified version~\cite{gitpysmt} of the pySMT~\cite{gario2015pysmt} library to implement our prototype, and use the Z3~\cite{demoura2008z3} solver for all SMT queries. We use the SMT-LIB~\cite{BarFT-SMTLIB} theory of free sorts and function symbols with datatypes and quantifiers ({\tt UFDT}), which allows formulating SMT queries for both, the finite and the unbounded domains. 
For a safe protocol, the inductive proof is printed in the Ivy format as an \textit{independently check-able} proof certificate, which can be further validated with the Ivy verifier.

\vspace{-5pt}
\section{Evaluation}
\label{sec:evaluation}
\noindent
We evaluated \icpo~on a total of $29$ distributed protocols including 4 problems from~\cite{ma2019i4}, 13 from~\cite{pldi20folic3}, and 12 from~\cite{ivybench}. 
This evaluation set includes fairly complex models of consensus algorithms as well as protocols such as two-phase commit, chord ring, hybrid reliable broadcast, etc. Several studies~\cite{hawblitzel2015ironfleet,padon2016ivy,ma2019i4,pldi20folic3,feldman2019inferring,berkovits2019verification} have indicated the challenges involved in verifying these protocols. 

All $29$ protocols are safe based on manual verification. Even though finding counterexample traces is equally important, we limit our evaluation to safe protocols where the property holds, since inferring inductive invariants is the main bottleneck of existing techniques for verifying distributed protocols~\cite{feldman2019complexity,padon2016ivy, DBLP:journals/corr/abs-2008-09909}.

We compared \icpo against the following $3$ verifiers that implement state-of-the-art IC3-style techniques for automatic verification of distributed protocols:
\begin{itemize}
    \item \ifour~\cite{ma2019i4} performs finite-domain IC3 (without accounting for symmetry) using the AVR model checker~\cite{goel2020avr}, followed by iteratively generalizing and checking the inductive invariant produced by AVR using Ivy.
    \item \updr is the implementation of the PDR$^{\forall}$/UPDR algorithm~\cite{10.1145/3022187} for verifying distributed protocols, from the \textit{mypyvy}~\cite{mypyvygithub} framework.
    \item \folic~\cite{pldi20folic3} is a recent technique implemented in \textit{mypyvy} that extends IC3 with the ability to infer inductive invariants with quantifier alternations.
\end{itemize}
All experiments were performed on an Intel~(R) Xeon CPU (X5670). For each run, we used a timeout of 1 hour and a memory limit of 32 GB. All tools were executed in their respective default configurations. We used Z3~\cite{demoura2008z3} version 4.8.9, Yices 2~\cite{dutertre2014yices} version 2.6.2, and CVC4~\cite{BCD_11} version 1.7.

\subsection{Results}
\label{sec:eval_results}
\begin{table}[!tb]
\setlength\tabcolsep{2pt}
\begin{center}
\resizebox{\textwidth}{!}{
\begin{tabular}{l|r|lrrr|rrr|rrr|rrr}
    \multicolumn{1}{c}{} & \multicolumn{1}{c}{\makebox[20pt][c]{\textit{Human}}} & \multicolumn{4}{c}{\icpo} & \multicolumn{3}{c}{\ifour} & \multicolumn{3}{c}{\updr} & \multicolumn{3}{c}{\folic} \\
    \hline
    \multicolumn{1}{c}{Protocol (\#29)} & \multicolumn{1}{|c}{Inv} & \multicolumn{1}{|c}{info} & \multicolumn{1}{c}{Time} & \multicolumn{1}{c}{Inv} & \multicolumn{1}{c}{SMT} & \multicolumn{1}{|c}{Time} & \multicolumn{1}{c}{Inv} & \multicolumn{1}{c}{SMT} & \multicolumn{1}{|c}{Time} & \multicolumn{1}{c}{Inv} & \multicolumn{1}{c}{SMT} & \multicolumn{1}{|c}{Time} & \multicolumn{1}{c}{Inv} & \multicolumn{1}{c}{SMT} \\
    
	\hline
	tla-consensus  & 1 & \eprsym & \textbf{0} & 1 & 17 & 4 & 1 & 7 & 0 & 1 & 38 & 1 & 1 & 29 \\
	tla-tcommit & 3 & \eprsym & \textbf{1} & 2 & 31 &  \multicolumn{2}{l}{ \textcolor{failcolor}{unknown} } & 71 & 1 & 3 & 214 & 2 & 3 & 162 \\
	i4-lock-server & 2 & \eprsym & \textbf{1} & 2 & 37 & 2 & 2 & 35 & 1 & 2 & 133 & 1 & 2 & 66 \\
	ex-quorum-leader-election & 3 & \eprsym & \textbf{3} & 5 & 129 & 32 & 14 & 15429 & 11 & 3 & 1007 & 24 & 8 & 1078 \\
	pyv-toy-consensus-forall & 4 & \eprsym & \textbf{3} & 4 & 105 &  \multicolumn{2}{l}{ \textcolor{failcolor}{unknown} } & 5949 & 10 & 3 & 590 & 11 & 5 & 587 \\
	tla-simple & 8 & \eprsym & 6 & 3 & 285 & \textbf{4} & 3 & 1319 &  \multicolumn{2}{l}{ \textcolor{failcolor}{timeout} } & &  \multicolumn{2}{l}{ \textcolor{failcolor}{timeout} } & \\
	ex-lockserv-automaton & 2 & \eprsym & 7 & 12 & 594 & \textbf{3} & 15 & 1731 & 21 & 9 & 3855 & 10 & 12 & 1181 \\
	tla-simpleregular & 9 & \eprsym & \textbf{8} & 4 & 346 &  \multicolumn{2}{l}{ \textcolor{failcolor}{unknown} } & 14787 &  \multicolumn{2}{l}{ \textcolor{failcolor}{timeout} } & & 57 & 9 & 314 \\
	pyv-sharded-kv & 5 & \eprsym & 10 & 8 & 590 & \textbf{4} & 15 & 2101 & 6 & 7 & 784 & 22 & 10 & 522 \\
	pyv-lockserv & 9 & \eprsym & 11 & 12 & 702 & \textbf{3} & 15 & 1606 & 14 & 9 & 3108 & 8 & 11 & 1044 \\
	tla-twophase & 12 & \eprsym & 14 & 10 & 984 &  \multicolumn{2}{l}{ \textcolor{failcolor}{unknown} } & 10505 & 67 & 14 & 12031 & \textbf{9} & 12 & 1635 \\
	i4-learning-switch & 8 & \eprsym & \textbf{14} & 9 & 589 & 22 & 11 & 26345 &  \multicolumn{2}{l}{ \textcolor{failcolor}{timeout} } & &  \multicolumn{2}{l}{ \textcolor{failcolor}{timeout} } & \\
	ex-simple-decentralized-lock & 5 & \eprsym & 19 & 15 & 2219 & 14 & 22 & 5561 & 4 & 2 & 677 & \textbf{4} & 8 & 291 \\
	i4-two-phase-commit & 11 & \eprsym & 27 & 11 & 2541 & \textbf{4} & 16 & 4045 & 16 & 9 & 2799 & 8 & 9 & 1083 \\
	pyv-consensus-wo-decide & 5 & \eprsym & \textbf{50} & 9 & 1886 & 1144 & 42 & 41137 & 100 & 4 & 8563 & 168 & 26 & 5692 \\
	pyv-consensus-forall & 7 & \eprsym & \textbf{99} & 10 & 3445 & 1006 & 44 & 156838 & 490 & 6 & 24947 & 2461 & 27 & 16182 \\
	pyv-learning-switch & 8 & \eprsym & \textbf{127} & 13 & 3388 & 387 & 49 & 51021 & 278 & 11 & 3210 &  \multicolumn{2}{l}{ \textcolor{failcolor}{timeout} } & \\
	i4-chord-ring-maintenance & 18 & \eprsym & \textbf{229} & 12 & 6418 &  \multicolumn{2}{l}{ \textcolor{failcolor}{timeout} } & &  \multicolumn{2}{l}{ \textcolor{failcolor}{timeout} } & &  \multicolumn{2}{l}{ \textcolor{failcolor}{timeout} } & \\
	pyv-sharded-kv-no-lost-keys & 2 & \altsym \eprsym & \textbf{3} & 2 & 57 &  \multicolumn{2}{l}{ \textcolor{failcolor}{unknown} } & 1232 &  \multicolumn{2}{l}{ \textcolor{failcolor}{unknown} } & 73 & 3 & 2 & 51 \\
	ex-naive-consensus & 4 & \altsym \eprsym & \textbf{6} & 4 & 239 &  \multicolumn{2}{l}{ \textcolor{failcolor}{unknown} } & 15141 &  \multicolumn{2}{l}{ \textcolor{failcolor}{unknown} } & 1325 & 73 & 18 & 414 \\
	pyv-client-server-ae & 2 & \altsym \defsym \eprsym & \textbf{2} & 2 & 49 &  \multicolumn{2}{l}{ \textcolor{failcolor}{unknown} } & 1483 &  \multicolumn{2}{l}{ \textcolor{failcolor}{unknown} } & 132 & 877 & 15 & 700 \\
	ex-simple-election & 3 & \altsym \defsym \eprsym & \textbf{7} & 4 & 268 &  \multicolumn{2}{l}{ \textcolor{failcolor}{unknown} } & 2747 &  \multicolumn{2}{l}{ \textcolor{failcolor}{unknown} } & 1147 & 32 & 10 & 222 \\
	pyv-toy-consensus-epr & 4 & \altsym \defsym \eprsym & \textbf{9} & 4 & 370 &  \multicolumn{2}{l}{ \textcolor{failcolor}{unknown} } & 5944 &  \multicolumn{2}{l}{ \textcolor{failcolor}{unknown} } & 473 & 70 & 14 & 217 \\
	ex-toy-consensus & 3 & \altsym \defsym \eprsym & \textbf{10} & 3 & 209 &  \multicolumn{2}{l}{ \textcolor{failcolor}{unknown} } & 2797 &  \multicolumn{2}{l}{ \textcolor{failcolor}{unknown} } & 348 & 21 & 8 & 124 \\
	pyv-client-server-db-ae & 5 & \altsym \defsym \eprsym & \textbf{17} & 6 & 868 &  \multicolumn{2}{l}{ \textcolor{failcolor}{unknown} } & 81509 &  \multicolumn{2}{l}{ \textcolor{failcolor}{unknown} } & 422 &  \multicolumn{2}{l}{ \textcolor{failcolor}{timeout} } & \\
	pyv-hybrid-reliable-broadcast & 8 & \altsym \defsym \eprsym & \textbf{587} & 4 & 1474 &  \multicolumn{2}{l}{ \textcolor{failcolor}{unknown} } & 34764 &  \multicolumn{2}{l}{ \textcolor{failcolor}{unknown} } & 713 & 1360 & 23 & 3387 \\
	pyv-firewall & 2 & \altsym \ndefsym \neprsym & \textbf{2} & 3 & 131 &  \multicolumn{2}{l}{ \textcolor{failcolor}{unknown} } & 344 &  \multicolumn{2}{l}{ \textcolor{failcolor}{unknown} } & 130 & 7 & 8 & 116 \\
	ex-majorityset-leader-election & 5 & \altsym \ndefsym \neprsym & \textbf{72} & 7 & 1552 &  \multicolumn{2}{l}{ \textcolor{failcolor}{error} } & &  \multicolumn{2}{l}{ \textcolor{failcolor}{unknown} } & 2350 &  \multicolumn{2}{l}{ \textcolor{failcolor}{timeout} } & \\
	pyv-consensus-epr & 7 & \altsym \defsym \neprsym & \textbf{1300} & 9 & 29601 &  \multicolumn{2}{l}{ \textcolor{failcolor}{unknown} } & 177189 &  \multicolumn{2}{l}{ \textcolor{failcolor}{unknown} } & 7559 & 1468 & 30 & 3355 \\
	\hline
	\multicolumn{2}{l}{No. of problems solved \hfill (out of 29)} & \multicolumn{4}{|c}{\textbf{29}} & \multicolumn{3}{|c}{13} & \multicolumn{3}{|c}{14} & \multicolumn{3}{|c}{23}\\
	\multicolumn{2}{l}{Uniquely solved} & \multicolumn{4}{|c}{\textbf{3}} & \multicolumn{3}{|c}{0} & \multicolumn{3}{|c}{0} & \multicolumn{3}{|c}{0}\\
	\hline
	\multicolumn{2}{l}{For $10$ cases solved by all: \hfill $\sum$ Time } & \multicolumn{4}{|c}{\textbf{232}} & \multicolumn{3}{|c}{2221} & \multicolumn{3}{|c}{667} & \multicolumn{3}{|c}{2711}\\
	\multicolumn{2}{l}{\hfill $\sum$ Inv \hskip 5pt ~} & \multicolumn{4}{|c}{85} & \multicolumn{3}{|c}{186} & \multicolumn{3}{|c}{\textbf{52}} & \multicolumn{3}{|c}{114}\\
	\multicolumn{2}{l}{\hfill $\sum$ SMT } & \multicolumn{4}{|c}{\textbf{12160}} & \multicolumn{3}{|c}{228490} & \multicolumn{3}{|c}{45911} & \multicolumn{3}{|c}{27168}\\
	\hline
\end{tabular}
}
\captionsetup{justification=centering, belowskip=0pt}
\caption{Comparison of \icpo~against other state-of-the-art verifiers}
\captionsetup{justification=raggedright, aboveskip=0pt,belowskip=0pt}
\caption*{
Time: run time (seconds),
Inv: \# assertions in inductive proof, SMT: \# SMT queries, \\
Column ``info'' provides information on the strengthening assertions (i.e., $A$) in \icpo's inductive proof: \altsym indicates $A$ has quantifier alternations, \defsym means $A$ has definitions, and \neprsym means $A$ adds quantifier-alternation cycles
}
\label{tab:table1}
\end{center}
\end{table}

\noindent
Table~\ref{tab:table1} summarizes the experimental results.
Apart from the number of problems solved, we compared the tools on $3$ metrics: run time in seconds, proof size measured by the number of assertions in the inductive invariant for the unbounded protocol, and the total number of SMT queries made. Each tool uses SMT queries differently (e.g., \ifour~uses {\tt QF\_UF} for finite, {\tt UF} for unbounded). Comparing the number of SMT queries still helps in understanding the run time behavior.

\icpo solved all $29$ problems, while $10$ protocols were solved by all the tools. The $5$ rows at the bottom of Table~\ref{tab:table1} provide a summary of the comparison. Overall, compared to the other tools 
\icpo is faster, requires fewer SMT queries, and produces shorter inductive proofs even for problems requiring inductive invariants with quantifier alternations (marked with \altsym in Table~\ref{tab:table1}).

We did a more extensive comparison between the two finite-domain incremental induction verifiers \icpo and \ifour (\ref{sec:eval_symboost}), performed a statistical analysis using multiple runs with different solver seeds to account for the effect of randomness in SMT solving (\ref{sec:eval_statistical}), compared the inductive proofs produced by \icpo~against human-written invariants (\ref{sec:eval_human}), and performed a preliminary exploration of distributed protocols with totally-ordered domains and ring topologies (\ref{sec:eval_order}).

\subsection{Discussion}
\label{sec:eval_discussion}
Comparing \icpo~and \ifour~clearly reveals the benefits of symmetric incremental induction. For example, \ifour requires $7814$ SMT queries to eliminate $443$ CTIs when solving \naiveinstance, compared to $192$ SMT calls and $13$ CTIs  for \icpo.
Even though both techniques perform finite incremental induction, symmetry-aware clause boosting in \icpo leads to a factorial reduction in the number of SMT queries and yields compact inductive proofs.

Comparing \icpo and \updr reveals the benefits of finite-domain reasoning methods compared to direct unbounded verification. Even in cases where existential quantifier inference isn't necessary,  symmetry-aware finite-domain reasoning gives \icpo an edge both in terms of run time and the number of SMT queries. 

Comparing \icpo and \folic, the only two verifiers that can infer invariants with a combination of universal and existential quantifiers, highlights the advantage of \icpo's approach over the separators-based technique~\cite{pldi20folic3} used in \folic. The significant performance edge that \icpo has over \folic is due to the fact that a) reasoning in \icpo is primarily in a (small) finite domain compared to \folic's unbounded reasoning, and b) unlike \folic which enumeratively searches for specific quantifier patterns, \icpo finds the required invariants without search  by automatically inferring their patterns from the symmetry of the protocol. 

Overall, the evaluation confirms the main hypothesis of this paper, that it is possible to use the relationship between symmetry and quantification to scale the verification of distributed protocols beyond the current state-of-the-art.

\vspace{-4pt}
\section{Related Work}
\label{sec:related}
\noindent
Introduced by Lamport, TLA+ is a widely-used language for the specification and verification of distributed protocols~\cite{newcombe2015amazon,beers2008pre}. The accompanying TLC model checker can perform automatic verification on a finite instance of a TLA+ specification, and can also be configured to employ symmetry to improve scalability. However, TLC is primarily intended as a debugging tool for small finite instances and not as a tool for inferring inductive invariants.

Several manual or semi-automatic verification techniques (e.g., using interactive theorem proving or compositional verification) have been proposed for deriving system-level proofs~\cite{owre1992pvs,chaudhuri2010verifying,hawblitzel2015ironfleet,Wilcox2015Verdi,hoenicke2017thread,v2019pretend}. 
These techniques generally require a deep understanding of the protocol being verified and significant manual effort to guide proof development.
The Ivy~\cite{padon2016ivy} system improves on these techniques by graphically displaying CTIs and interactively asking the user to provide strengthening assertions that can eliminate them.

Verification of parameterized systems using SMT solvers is further explored in MCMT~\cite{ghilardi2010backward}, Cubicle~\cite{conchon2012cubicle}, and paraVerifier~\cite{li2015paraverifier}. Abdulla et al.~\cite{abdulla2016parameterized} proposed \textit{view abstraction} to compute the reachable set for finite instances using forward reachability until cutoff is reached. Our technique builds on these works with the capability to automatically infer the required quantified inductive invariant using the latest advancements in model checking, by combining symmetry-aware clause learning and quantifier inference in finite-domain incremental induction. 
The use of derived/ghost variables has been recognized as important in~\cite{lamport1977proving,owicki1976verifying,namjoshi2007symmetry}. \icpo utilizes protocol structure, namely auxiliary definitions in the protocol specification, to automatically infer inductive invariants with complex quantifier alternations.

Several recent approaches (e.g., \updr~\cite{karbyshev2017property}, QUIC3~\cite{gurfinkel2018quantifiers}, Phase-\updr~\cite{feldman2019inferring}, \folic~\cite{pldi20folic3}) extend IC3/PDR to automatically infer quantified inductive invariants.

Unlike \icpo, these techniques rely heavily on unbounded  SMT solving.

Our work is closest in spirit to FORHULL-N~\cite{dooley2016proving} and I4~\cite{ma2019i4,ma2019towards}.
Similar to \icpo, these techniques perform incremental induction over small finite instances of a parameterized system and employ a generalization procedure that transforms finite-domain proofs to quantified inductive invariants that hold for all parameter values.
Dooley and Somenzi proposed FORHULL-N to verify parameterized reactive systems by running bit-level IC3 and generalizing the learnt clauses into candidate universally-quantified proofs through a process of proof saturation and convex hull computation.
These candidate proofs involve modular linear arithmetic constraints as antecedents in a way such that they approximate the protocol behavior beyond the current finite instance, and their correctness is validated by checking them until the cutoff is reached.
I4 uses an ad hoc generalization procedure to obtain universally-quantified proofs from the finite-domain inductive invariants generated by the AVR model checker~\cite{goel2020avr}.

\section{Conclusions and Future Work}
\label{sec:conclusions}
\noindent
\icpo is, to our knowledge, the first verification system that uses the synergistic relationship between symmetry and quantification to automatically infer the quantified inductive invariants required to prove the safety of symmetric  protocols.
Recognizing that symmetry and quantification are alternative ways of capturing invariance, \icpo extends the incremental induction algorithm to learn clause orbits, and encodes these orbits with corresponding logically-equivalent and compact quantified predicates. 
\icpo employs a systematic procedure to check for finite convergence, and outputs quantified inductive invariants, with both universal and existential quantifiers, that hold for all protocol parameters.
Our evaluation demonstrates that \icpo~significantly is a significant improvement over the current state-of-the-art.

Future work includes exploring methods to utilize the regularity in totally-ordered domains during reachability analysis, investigating techniques to counter undecidability in practical distributed systems verification, and exploring enhancements to further improve the scalability to complex distributed protocols and their implementations.
As a long-term goal, we aim towards automatically inferring inductive invariants for complicated distributed protocols, such as Paxos~\cite{lamport2019part,lamport2001paxos}, by building further on this initial work.

\section*{Data Availability Statement and Acknowledgments}
\addcontentsline{toc}{section}{Data Availability Statement and Acknowledgments}
The software and data sets generated and analyzed during the current study, including all experimental data, evaluation scripts, and \icpo source code are available at \url{https://github.com/aman-goel/nfm2021exp}.
We thank the developers of pySMT~\cite{gario2015pysmt}, Z3~\cite{demoura2008z3}, and Ivy~\cite{padon2016ivy} for making their tools openly available. 
We thank the authors of the I4 project~\cite{ma2019i4} for their help in shaping some of the ideas presented in this paper.



\bibliographystyle{splncs04}
\bibliography{header-standard.bib,cite_database.bib,reference-db.bib}

\begin{subappendices}
\renewcommand{\thesection}{\appendixname~\Alph{section}}

\clearpage
\onecolumn
\section*{Appendices}
\vspace{10pt}
\noindent We include additional/supplementary material in the appendices, as follows:
\vspace{10pt}
\begin{itemize}
    \item[] \ref{app:algo}: \textit{\icpo Pseudo Code (detailed)}
    \begin{itemize}
    \item[--] Presents the detailed pseudo code of \icpo and \symic
    \end{itemize}
    \vspace{10pt}
    
    \item[] \ref{sec:proof}: \textit{Proof of Correctness }
    \begin{itemize}
    \item[--] Provides a correctness proof for symmetry-aware clause boosting during incremental induction (Section~\ref{sec:symclause}), and a correctness proof for quantifier inference (Section~\ref{sec:quantinfer})
    \end{itemize}
    \vspace{10pt}
    
    \item[] \ref{sec:opt}: \textit{Simple Enhancements to the \symic Algorithm} 
    \begin{itemize}
    \item[--] Describes simple enhancements to \symic learning as briefly mentioned in Section~\ref{sec:ic3po_highlevel}
    \end{itemize}
    \vspace{10pt}
    
    \item[] \ref{sec:eval_symboost}: \textit{Effect of Symmetry Learning in Incremental Induction }
    \begin{itemize}
    \item[--] Evaluates the effect of symmetry-aware learning in finite-domain incremental induction with a detailed comparison between \icpo and \ifour
    \end{itemize}
    \vspace{10pt}
    
    \item[] \ref{sec:eval_statistical}: \textit{Statistical Analysis with Multiple SMT Solver Seeds}
    \begin{itemize}
    \item[--] Provides a statistical analysis of the experiments from Section~\ref{sec:evaluation} through multiple runs for each tool with different solver seeds
    \end{itemize}
    \vspace{10pt}
    
    \item[] \ref{sec:eval_human}: \textit{Comparison against Human-Written Invariants}
    \begin{itemize}
    \item[--] Compares \icpo's automatically-generated quantified inductive invariants against human-written invariant proofs on several metrics
    \end{itemize}
    \vspace{10pt}
    
    \item[] \ref{sec:eval_order}: \textit{Ordered Domains, Ring Topology, and Special Variables}
    \begin{itemize}
    \item[--] Describes an extension to \icpo that allows handling totally-ordered domains, as well as further details relating to ring topology and special variables, along with a preliminary evaluation
    \end{itemize}
    \vspace{10pt}
    
    \item[] \ref{app:sizes}: \textit{Finite Instance Sizes used in the Experiments}
    \begin{itemize}
    \item[--] Lists down the instance sizes for \icpo and \ifour for each protocol in the evaluation (Section~\ref{sec:evaluation})
    \end{itemize}
    \vspace{10pt}
    
\end{itemize}

\clearpage
\onecolumn
\section{\icpo~Pseudo Code (detailed)}
\label{app:algo}
This section presents the detailed pseudo code of \icpo and \symic.

\setlength{\textfloatsep}{15pt}
\begin{algorithm}[H]
\begin{algorithmic}[1]
\algrenewcommand\alglinenumber[1]{\fontsize{7}{8} \selectfont #1}
\tlasize
\Procedure{\regularf{\icpo}}{$\pro$, $\sz_0$} \Comment{$\pro \triangleq [S,R,Init,T,P]$, and $\sz_0$ is the initial base size}
\State $reuse$ $\gets$ $\{\}$
\State $\szk$ $\gets$ $\sz_0$
\State $\invk, \cexk$ $\gets$ \regularf{\symic}($\fprok$, $reuse$) \Comment{run symmetric incremental induction on $\fprok \triangleq \pro(\szk)$} \label{line:symic3}	
\If {$\cexk$ is not empty} \Comment{counterexample found}
	\State \Return \texttt{Violated}, $\cexk$ \Comment{property is violated}
\Else	\Comment{property proved for the finite protocol instance $\fprok$}
    \ForEach {$\s_{\fin{i}} \in S$}
        \If {\Not~\regularf{IsInductiveInvariantFinite}($\invk$, $\fpronext$)}	
            \State $reuse$ $\gets$ \{ $\Phi~|~\Phi \in \invk$ and $Init \to \Phi$ and $Init \wedge T \to \Phi'$ in $\fpronext$ \}
            \State $\szk$ $\gets$ $\sznext$ \Comment{failed convergence checks for sort $\s_{\fin{i}}$, increase instance size}
            \State \Goto{line:symic3} \Comment{re-run \symic~with the increased size}
    	\EndIf
    \EndFor
    \If {\Not~\regularf{IsInductiveInvariantUnbounded}($\invk$, $\pro$)}
        \State \texttt{< never occurred >} \Comment{unbounded check failed}
        \State \Return \texttt{Error}, \texttt{Increase $\sz_0$} 
	\EndIf
	\State \Return \texttt{Safe}, $\invk$ \Comment{property is proved safe with proof certificate $\invk$}
\EndIf
\EndProcedure

\captionof{algorithm}{\textit{\textbf{IC3} for \textbf{P}r\textbf{o}ving \textbf{P}r\textbf{o}tocol \textbf{P}r\textbf{o}perties}}
\label{alg:ic3po}
\end{algorithmic}
\end{algorithm}

Algorithm~\ref{alg:ic3po} presents the detailed pseudo code of \icpo. Let $\sz: S \rightarrow \mathbb{N}$ be a function that maps each sort $\fin{s_i} \in S$ to a sort size $\fin{|s_i|}$. Given a protocol specification $\pro$ and an initial base size $\sz_0$, \icpo~invokes \symic~on the finite protocol instance $\fprok \triangleq \pro(\szk)$, where $\szk$ is initialized to $\sz_0$ (lines 2-4). Upon termination, \symic~either a) produces a quantified inductive invariant $\invk$ that proves the property for $\fprok$, or b) a counterexample trace $\cexk$ that serves as a finite witness to its violation in both $\fprok$ and the unbounded protocol $\pro$ (lines 4-6). If the property holds for $\fprok$, \icpo~performs finite convergence checks (Section~\ref{sec:invcheck}) to check whether or not the invariant extends beyond $\fprok$ (lines 8-12), by checking whether or not $\invk$ is an inductive invariant for the larger finite instance $\fprok^{\fin{i}} \triangleq \fpronext$ for each $\s_{\fin{i}} \in S$, where $\sznext \triangleq [\szk$ \EXCEPT \bang $[\s_{\fin{i}}] = \szk(\s_{\fin{i}})+1]$. If all finite checks pass, $\invk$ is checked whether an inductive invariant in the unbounded domain (lines 13-15) using the standard induction checks-- a) $Init \to \invk$, and b) $\invk \wedge T \to \invk'$ in the unbounded domain. If all these checks pass, \icpo~emits the unbounded invariant $\invk$, that holds for the unbounded $\pro$ and is a proof certificate for the safety property (line 16). Otherwise, it re-starts \symic~on a finite instance with an increased size $\sznext$ (lines 11-12), while seeding in all the strengthening assertions in $\invk$ that are safe to learn in the first frame for the new \symic iteration (line 10).

\begin{algorithm}[!t]
\algrenewcommand\alglinenumber[1]{\fontsize{7}{8} \selectfont #1}
\caption{\textit{Symmetric Incremental Induction}}
\begin{algorithmic}[1]
\setcounter{ALG@line}{16}
\tlasize
\Procedure{\regularf{\symic}($\fprok$, $reuse$)}{} \Comment{$\fpro \triangleq [S,R,\hat{Init},\hat{T},\hat{P}]$}
\Statex \Comment{$reuse$ is a set of seed assertions that are safe to learn in the frame $F_1$}
\State $F$ $\gets$ $\emptyset$, $Cex$ $\gets$ $\emptyset$ \Comment{$\fprok$, $F$, $Cex$ are global data structures}
\If {\texttt{ SAT ? [ $\hat{Init} \wedge \neg \hat{P}$ ]: model $\fin{m}$}} \Comment{initial states check}
    \State $state$ $\gets$ \regularf{StateAsCube}($\fin{m}$) \Comment{get a single state from model $\fin{m}$, in cube form}
	\State $Cex.extend(state)$ \Comment{property is trivially violated}
	\State \Return $\emptyset$, $Cex$ \Comment{return the counterexample}
\EndIf
\State $F.extend(\hat{Init})$ \Comment{setup the initial frame}

\While{$\top$}
    \State $N$ $\gets$ $F.size() - 1$
    \If {\texttt{ SAT ? [ $F_N \wedge \hat{T} \wedge \neg \hat{P}'$ ]: model $\fin{m}$}}  
    \Statex \Comment{check the topmost frame for counterexample-to-induction (CTI)}
    \State $state$ $\gets$ \regularf{StateAsCube}($\fin{m}$) \Comment{found a CTI}
	    \If {\regularf{SymRecBlockCube}($state$, $N$)} \Comment{try recursively blocking the CTI}
        	\State \Return $\emptyset$, $Cex$ \Comment{failed to block CTI, return the counterexample}
        \EndIf
    \Else \Comment{no CTI in the topmost frame}
    	\State $F.extend(\hat{P})$ \Comment{add a new frame}
	    \If {$N = 0$} \Comment{add reusable seed assertions to the frame $F_1$}
        	\State $F[1].add(reuse)$
        \EndIf
    	\If {\regularf{ForwardPropagate()}} \Comment{propagate inductive assertions forward}
    		\State \Return $F_{converged}$, $\emptyset$ 
    		\Statex \Comment{frames converged, return $F_{converged}$ as the inductive invariant}
    	\EndIf
	\EndIf
\EndWhile

\EndProcedure



\Statex
\Procedure{\regularf{SymRecBlockCube}}{$cti, i$} \Comment{$cti$ can reach $\neg \hat{P}$ in $F.size() - i$ steps}
\State $Cex.extend(cti)$ \Comment{add the CTI to the counterexample}
\If {$i = 0$} \Comment{check if reached the initial states}
    \State \Return $\top$ \Comment{reached initial states, property is violated}
\EndIf
\If {\texttt{ SAT ? [ $F_{i-1} \wedge \hat{T} \wedge cti'$ ]: model $\fin{m}$}} 
\Statex \Comment{check if $cti$ is reachable from previous frame}
    \State $state$ $\gets$ \regularf{StateAsCube}($\fin{m}$) 
    \Statex \Comment{$state$ is the new CTI reachable to $\neg \hat{P}$ in $(F.size() - i) + 1$ steps}
    \State \Return \regularf{SymRecBlockCube}($state$, $i-1$)  \Comment{try blocking the new CTI}
\Else \Comment{$cti$ is unreachable from the previous frame}
	\State $uc'$ $\gets$ \regularf{MinimalUnsatCore}($F_{i-1} \wedge \hat{T}$, $cti'$) \Comment{get MUS from UNSAT query}
	\State $\varphi$ $\gets$ $\neg uc$ \Comment{negate $uc$ to get the quantifier-free clause}
	\State $\quant$ $\gets$ \regularf{\symboost}($\varphi$) \Comment{symmetry-aware clause boosting with quantifier inference}
	\State $\quant$ $\gets$ \regularf{AntecedentReduction}($\quant$, $i$) \Comment{antecedent reduction (optional),~\ref{sec:otheropt}}	
	\State $\quant$ $\gets$ \regularf{EprReduction}($\quant$, $i$) \Comment{EPR reduction (optional),~\ref{sec:epr}}
	\State \texttt{Learn($\quant$, $F_i$)} \Comment{learn $\quant$ in frame $i$}
	\State \Return $\perp$
\EndIf
\State
\EndProcedure

\end{algorithmic}
\label{alg:symrec}
\end{algorithm}

Algorithm~\ref{alg:symrec} describes the symmetric incremental induction algorithm. The procedure first checks whether the property can be trivially violated (lines 19-22), and if not, starts recursively deriving and blocking counterexamples-to-induction (CTI) from the topmost frame (lines 24-35). Given a solver model $\fin{m}$, a state cube is derived as a single state represented as a cube, i.e., a conjunction of literals assigning each state variable with a value based on its assignment in $\fin{m}$ (lines 20, 27, 41). Lines 32-33 add the seed assertions in the given $reuse$ set to the first frame $F_1$. 
\symic differs from the standard IC3 algorithm majorly in symmetry-aware quantified learning (line 46) and simple enhancements (lines 47-48).

\begin{algorithm}[!t]
\begin{algorithmic}[1]
\algrenewcommand\alglinenumber[1]{\fontsize{7}{8} \selectfont #1}
\setcounter{ALG@line}{51}
\tlasize
\Procedure{\symboost}{$\varphi$} \Comment{$\varphi$ is the quantifier-free clause}
\State $V_{\forall}$ $\gets$ \{\}, $V_{\exists}$ $\gets$ \{\} \Comment{a set of universally/existential quantified variables}
\State $body$ $\gets$ $\varphi$ \Comment{starting with $\varphi$, $body$ is recursively generated}
\Statex \Comment{$V_{\forall}$, $V_{\exists}$ and $body$ are global data structures}
\ForEach {sort $\s$ that appears in clause $\varphi$}
    \State $\pi(\varphi, \s)$ $\gets$ \regularf{PartitionDistribution}($\varphi$, $\s$) 
    \Statex \Comment{create a partition on constants in $\s$ based on their occurrence in $\varphi$}
    \If {$\numc(\varphi, \s) < |\s|$}
        \State $(V_{\forall}, V_{\exists}, body)$ $\gets$ \regularf{Infer$\forall$}($\varphi$, $\pi(\varphi, \s)$) \Comment{infer $\forall$ for sort $\s$, refer \S\ref{sec:quantinfer_main}.A}
    \ElsIf {$|\pi(\varphi, \s)| = 1$} \Comment{partition $\pi(\varphi, \s)$ contains a single cell}
        \State $(V_{\forall}, V_{\exists}, body)$ $\gets$ \regularf{Infer$\exists$}($\varphi$, $\pi(\varphi, \s)$) \Comment{infer $\exists$ for sort $\s$, refer \S\ref{sec:quantinfer_main}.B.I}
    \ElsIf {\textit{all but a few scenario}} \Comment{partition $\pi(\varphi, \s)$ contains multiple cells}
        \State $(V_{\forall}, V_{\exists}, body)$ $\gets$ \regularf{Infer$\forall\exists$}($\varphi$, $\pi(\varphi, \s)$)  \Comment{infer $\forall\exists$ for sort $\s$, refer \S\ref{sec:quantinfer_main}.B.II}
    \Else
        \State \texttt{< never occurred >}
        \Statex \Comment{infer $\forall$ by default (may not be compact, though correct for the current instance)}
        \State $(V_{\forall}, V_{\exists}, body)$ $\gets$ \regularf{Infer$\forall$}($\varphi$, $\pi(\varphi, \s)$)
    \EndIf
\EndFor
\State $\quant$ $\gets$ $\forall V_{\forall}.~\exists V_{\exists}.~body$ \Comment{stitch quantifiers for different sorts as $\forall_{\dots}~\exists_{\dots}~<body>$}
\State \Return $\quant$ \Comment{$\Phi$ is the quantified predicate to learn in a \symic~frame}
\EndProcedure
\end{algorithmic}
\captionof{algorithm}{\textit{Symmetry-aware Clause Boosting with Quantifier Inference}}
\label{alg:symgen}
\end{algorithm}

The core of the \symic~algorithm is the \symboost~algorithm, presented in Algorithm~\ref{alg:symgen}.
\symboost~is a simple and extendable procedure to perform symmetry-aware clause boosting and quantifier inference, as explained in detail in Sections~\ref{sec:symclause} and~\ref{sec:quantinfer}.
Starting from a given quantifier-free clause $\varphi$, the algorithm constructs a symmetrically-boosted quantified predicate $\quant$ (line 67) by iteratively inferring quantifiers for each sort $\s$ (lines 55-65), and stitching them together (line 66).
The algorithm maintains a set of universal and existential variables (line 53) and a $body$ (line 54), that are iteratively modified based on the quantifier inference for each sort.
For each sort $\s$, the algorithm first generates $\pi(\varphi, \s)$ (line 56) based on how constants in sort $\s$ appear in the literals of $\varphi$ (whether identically or not). The next step is to infer quantifiers using $\numc(\varphi, \s)$ and $\pi(\varphi, \s)$ (lines 57-65): a) infer universal quantifiers when $\numc(\varphi, \s) < |\s|$, b) otherwise if all constants of $\s$ appear in $\varphi$ identically, infer existential quantifier, c) otherwise if \textit{all but a few scenario}, infer $\forall\exists$ based on the partitioning of constants in $\pi(\varphi, s)$, and d) otherwise, infer $\forall$ by default (this case has not occurred).
Changing the iteration order in line 55 doesn't result in any difference, and is ensured during the recursive building of the $body$. At the end, a single quantified predicate $\Phi$ is derived by stitching together the quantified variables in $V_{\forall}$ and $V_{\exists}$ with the $body$ as $\forall_{\dots}~\exists_{\dots}~<body>$ (line 66).

\clearpage
\onecolumn
\section{Proof of Correctness}
\label{sec:proof}
\vspace{10pt}
\subsection{Correctness Proof for Symmetric Incremental Induction}
\label{sec:proof_symboost}
This section provides a correctness proof for symmetry-aware clause boosting during incremental induction (Section~\ref{sec:symclause}).
\vspace{10pt}

Like the invariance of $\hat{Init}, \hat{T}$, and $\hat{P}$ under any permutation $\gamma \in G$ (refer~(\ref{eqn:ITP_invariance})), the logical orbit of a clause $\varphi$ is also invariant under such permutations, i.e.,
{\small
$${\left[ {{\varphi ^{L(G)}}} \right]^\gamma } \leftrightarrow {\varphi ^{L(G)}}$$
}%

\vspace{10pt}
\begin{lemma}
\label{lemma:frame}
For any \symic~frame $F_{i}$, $F_{i}^{\gamma} \same F_{i}$ for any $\gamma \in G$.
\end{lemma}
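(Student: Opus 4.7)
The plan is to prove this by induction on the construction history of the frame sequence, following the operations by which \symic~builds and maintains frames. The key ingredients are already in hand: the invariance properties in (\ref{eqn:ITP_invariance}) give $\hat{Init}^\gamma \same \hat{Init}$ and $\hat{P}^\gamma \same \hat{P}$ for every $\gamma \in G$, and the remark immediately preceding the lemma establishes $[\varphi^{L(G)}]^\gamma \leftrightarrow \varphi^{L(G)}$ for any clause $\varphi$. So I need to argue that every formula \symic~ever conjoins into a frame belongs to one of these three invariant categories, and then use closure of invariance under conjunction.

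The proof would proceed in two steps. First I would establish the algebraic fact that if $\alpha^\gamma \same \alpha$ and $\beta^\gamma \same \beta$ for every $\gamma \in G$, then $(\alpha \wedge \beta)^\gamma \same \alpha \wedge \beta$; this is immediate because the action of $\gamma$ distributes over the Boolean connectives. Second, I would do induction on the sequence of operations performed by Algorithm~\ref{alg:symrec}. There are only three places a frame is modified: (i) $F_0$ is initialized to $\hat{Init}$, which is $G$-invariant by (\ref{eqn:ITP_invariance}); (ii) a new top frame is created and initialized to $\hat{P}$, again $G$-invariant by (\ref{eqn:ITP_invariance}); and (iii) in \textsc{SymRecBlockCube}, a learned predicate is added to some $F_i$. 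In case (iii), inspection of the \symboost~procedure shows that the quantified predicate $\Phi$ added to $F_i$ is logically equivalent to the logical orbit $\varphi^{L(G)}$ of the quantifier-free clause $\varphi$ extracted from the minimal unsat core. By the remark preceding the lemma, $\varphi^{L(G)}$ is $G$-invariant, and therefore so is $\Phi$.

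I also need to address forward propagation (line~35 of Algorithm~\ref{alg:symrec}) and the reuse-seeding step at line~33: propagation merely moves an existing $G$-invariant predicate from $F_{i-1}$ to $F_i$, preserving invariance, and seeding injects predicates that were themselves derived as logical orbits in the previous \symic~invocation and hence are $G$-invariant. Combining these observations with the conjunction-closure lemma, every frame $F_i$ is a conjunction of $G$-invariant formulas, and therefore $F_i^\gamma \same F_i$ for every $\gamma \in G$.

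The main obstacle is essentially bookkeeping rather than any deep argument: I need to be careful that every code path that mutates a frame is accounted for, including the optional enhancements (antecedent reduction and EPR reduction on lines~47--48) referenced but not detailed in the body of the paper. These enhancements must be shown to preserve $G$-invariance — either because they act uniformly on the entire orbit representation, or because they only rearrange/strengthen the quantified form in a way that commutes with the group action. Once this case analysis is complete, the inductive argument closes with no further difficulty.
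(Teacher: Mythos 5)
Your proposal is correct and takes essentially the same approach as the paper's own proof: an induction over the operations that build and modify frames, using the $G$-invariance of $\hat{Init}$, $\hat{P}$, and $\varphi^{L(G)}$, together with closure of $G$-invariance under conjunction. You are somewhat more thorough than the paper — you explicitly flag forward propagation, reuse-seeding, and the optional enhancements as code paths to be accounted for, whereas the paper only treats initialization and orbit-based refinement; note however that the optional enhancements yield closed quantified predicates over sort variables with no reference to specific constants, so their $G$-invariance is immediate, and the paper's experiments run with them disabled in any case.
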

\begin{proof}
Recall that $\hat{Init}^{\gamma} \same \hat{Init}$ and $\hat{P}^{\gamma} \same \hat{P}$. 
The condition $F_{i}^{\gamma} \same F_{i}$ is trivially true for $i=0$ since $F_0 = \hat{Init}$. 
When $i > 0$, the condition is true during frame initialization since each frame is initialized to $\hat{P}$. 
When blocking a cube $\neg \varphi$ in $F_i$, incremental induction with symmetry boosting refines $F_i$ with the complete logical orbit $\varphi^{L(G)}$ of $\varphi$. Since $\left[ \varphi^{L(G)} \right]^{\gamma} \same \varphi^{L(G)}$, the logical invariance of $F_i$ under $\gamma$, continues to be preserved in all backward reachability updates. 
\qed
\end{proof}

The following theorem establishes the correctness of symmetry-aware clause boosting in incremental induction.
\vspace{10pt}
\begin{theorem}
\label{lemma:correct1}
If a quantifier-free cube $\neg \varphi$ is unreachable from frame $F_{i-1}$, i.e., $F_{i-1} \wedge \hat{T} \wedge \neg [\varphi]'$ is unsatisfiable, then $F_{i-1} \wedge \hat{T} \wedge \neg [{\varphi^{L(G)}}]'$ is also unsatisfiable.
\end{theorem}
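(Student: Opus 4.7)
The plan is to reduce the claim to a per-permutation statement and then exploit the symmetry invariance of the relevant formulas. The key algebraic fact is that $\varphi^{L(G)}$ is a conjunction $\bigwedge_{\gamma \in G} \varphi^{L(\gamma)}$, so its negation is a disjunction $\bigvee_{\gamma \in G} \neg \varphi^{L(\gamma)}$. Hence $F_{i-1} \wedge \hat{T} \wedge \neg [\varphi^{L(G)}]'$ is unsatisfiable iff, for every $\gamma \in G$, the formula $F_{i-1} \wedge \hat{T} \wedge \neg [\varphi^{L(\gamma)}]'$ is unsatisfiable.

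First I would fix an arbitrary $\gamma \in G$ and apply $\gamma$ to the hypothesis formula. Since $\gamma$ acts as a bijective renaming of sort constants, it preserves satisfiability: a formula $\psi$ is unsatisfiable iff $\psi^{\gamma}$ is unsatisfiable. Applying $\gamma$ to $F_{i-1} \wedge \hat{T} \wedge \neg [\varphi]'$ yields
\[
F_{i-1}^{\gamma} \wedge \hat{T}^{\gamma} \wedge \neg [\varphi^{\gamma}]',
\]
which is therefore also unsatisfiable. Now I invoke Lemma~\ref{lemma:frame} to replace $F_{i-1}^{\gamma}$ by $F_{i-1}$, and equation~(\ref{eqn:ITP_invariance}) to replace $\hat{T}^{\gamma}$ by $\hat{T}$, obtaining unsatisfiability of $F_{i-1} \wedge \hat{T} \wedge \neg [\varphi^{\gamma}]'$.

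Next I would bridge from $\varphi^{\gamma}$ to the logical image $\varphi^{L(\gamma)}$. By definition, $\varphi^{L(\gamma)}$ is either $\varphi^{\gamma}$ or $\varphi$ itself (in the degenerate case $\varphi^{\gamma} \same \varphi$); in both cases it is logically equivalent to $\varphi^{\gamma}$ when that image is logically new, and equivalent to $\varphi$ otherwise. In either case the conjunct $F_{i-1} \wedge \hat{T} \wedge \neg [\varphi^{L(\gamma)}]'$ is unsatisfiable: the former case is what we just established, and the latter case follows directly from the hypothesis. Collecting this over all $\gamma \in G$ and taking the disjunction proves the theorem.

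I do not expect any substantive obstacle. The only point that requires a moment of care is the distinction between the syntactic image $\varphi^{\gamma}$ and the logical image $\varphi^{L(\gamma)}$, which is handled by the case split above. Everything else is a direct application of Lemma~\ref{lemma:frame}, the invariance equations~(\ref{eqn:ITP_invariance}), and the elementary fact that a permutation of constants is a satisfiability-preserving bijection on formulas.
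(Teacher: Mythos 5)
Your proof is correct and follows the same route as the paper's: apply $\gamma$ to the hypothesis query, invoke satisfiability-preservation of constant renaming, and then use the invariance of $F_{i-1}$ (Lemma~\ref{lemma:frame}) and $\hat{T}$ (equation~(\ref{eqn:ITP_invariance})) to conclude that each permuted cube $[\neg\varphi]^\gamma$ is also blocked. The only difference is that you make two implicit steps explicit—the reduction of the target to a per-$\gamma$ unsatisfiability claim via the conjunction/disjunction dual, and the case split between $\varphi^\gamma$ and $\varphi^{L(\gamma)}$—which is sound and, if anything, slightly more careful than the paper's presentation.
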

\begin{proof}
Let $Q \triangleq F_{i-1} \wedge \hat{T} \wedge \neg [\varphi]'$ and assume that $Q$ is unsatisfiable. Consider any permutation $\gamma \in G$ and the corresponding permuted formula $Q^{\gamma} \triangleq F_{i-1}^{\gamma} \wedge \hat{T}^{\gamma} \wedge \neg \left[{\varphi^{\gamma}}\right]'$.
Since permuting the sort constants simply re-arranges the protocol's state variables in a formula without affecting its satisfiability, $Q$ and $Q^{\gamma}$ must be equisatisfiable, and hence  $Q^{\gamma}$ is unsatisfiable. 

Noting that $\hat{T}$ and $F_{i-1}$ are invariant under $\gamma \in G$ (from (\ref{eqn:ITP_invariance}) and Lemma~\ref{lemma:frame}), we obtain 
$Q^\gamma =F_{i-1} \wedge \hat{T} \wedge \neg \left[{\varphi^{\gamma}}\right]'$ proving that if cube $\neg \varphi$ is unreachable from frame $F_{i-1}$, then its image under any $\gamma \in G$ is also unreachable. Therefore, $F_{i-1} \wedge \hat{T} \wedge \neg [{\varphi^{L(G)}}]'$ is unsatisfiable.
\qed
\end{proof}

\subsection{Correctness Proof for Quantifier Inference}
\label{sec:proof_qi}
This section provides a correctness proof sketch for quantifier inference (Section~\ref{sec:quantinfer}).

\vspace{10pt}
\begin{theorem}
\label{lemma:correct2_new}
Given a finite instance $\fpro$, let $\varphi$ be such that $0 < \numc(\varphi,\s) < |\s|$ for some sort $\s \in S$. Let $\Phi(\s)$ be the quantified predicate obtained by applying \symic's quantifier inference for $\s$. $\Phi(\s)$ is logically equivalent to $\varphi^{L(Sym(\s))}$.
\end{theorem}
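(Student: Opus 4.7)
The plan is to establish logical equivalence by showing two implications, relying on the fact that under the hypothesis of Case A (Section~\ref{sec:quantinfer_main}.A), the inferred predicate has the explicit form
\[
\Phi(\s) \;=\; \forall X_1,\dots,X_k \in \s:\; (\text{distinct}~X_1\,\dots\,X_k) \;\to\; \widehat{\varphi}(X_1,\dots,X_k),
\]
where $k = \numc(\varphi,\s)$, and $\widehat{\varphi}(X_1,\dots,X_k)$ is obtained from $\varphi$ by replacing the $k$ constants $c_1,\dots,c_k$ of $\s$ that occur in $\varphi$ with fresh variables $X_1,\dots,X_k$ respectively. The orbit, by the definition given just before~(\ref{eq:ex_clause1}), is $\varphi^{L(Sym(\s))} = \bigwedge_{\gamma \in Sym(\s)} \varphi^{L(\gamma)}$.

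For the forward direction $\Phi(\s) \Rightarrow \varphi^{L(Sym(\s))}$, I would fix an arbitrary $\gamma \in Sym(\s)$ and instantiate the universal quantifiers with $X_i \mapsto \gamma(c_i)$. Because $\gamma$ is a bijection on $\s$, the values $\gamma(c_1),\dots,\gamma(c_k)$ are pairwise distinct, so the antecedent of $\Phi(\s)$ is satisfied, and the consequent $\widehat{\varphi}(\gamma(c_1),\dots,\gamma(c_k))$ is syntactically $\varphi^{\gamma}$, which is logically equivalent to $\varphi^{L(\gamma)}$. Taking the conjunction over all $\gamma \in Sym(\s)$ yields $\varphi^{L(Sym(\s))}$.

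For the reverse direction $\varphi^{L(Sym(\s))} \Rightarrow \Phi(\s)$, I would take any assignment $X_i \mapsto d_i$ satisfying the distinctness antecedent and exhibit a permutation $\gamma \in Sym(\s)$ that realizes this assignment, so the required instance of $\widehat{\varphi}$ is just a member of the orbit. Concretely, define $\gamma$ on $\s$ by $\gamma(c_i) = d_i$ for $1 \le i \le k$; since both $\{c_1,\dots,c_k\}$ and $\{d_1,\dots,d_k\}$ are $k$-element subsets of the $|\s|$-element set $\s$, and since $k < |\s|$ guarantees that the complements are nonempty and of equal cardinality $|\s| - k$, we can extend the partial injection to a bijection on all of $\s$ by choosing any bijection between the two complements. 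Then $\widehat{\varphi}(d_1,\dots,d_k) = \varphi^{\gamma}$ is a conjunct of $\varphi^{L(Sym(\s))}$, completing the implication.

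The step I expect to be the only subtle point is the extension argument in the reverse direction, where the hypothesis $\numc(\varphi,\s) < |\s|$ is used: it ensures that a ``slack'' constant is available so that any distinct $k$-tuple of targets can be realized by a full permutation of $\s$, which is precisely what makes universal quantification with a distinctness guard logically equivalent to, rather than merely an over-approximation of, the orbit. The remaining bookkeeping---that substituting constants for variables in $\widehat{\varphi}$ commutes with the permutation action on $\varphi$, and that logical images coincide with syntactic images modulo commutativity/associativity of disjunction---is routine from the definitions in Section~\ref{sec:protocols}.
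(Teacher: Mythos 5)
Your proof is correct and reaches the same conclusion, but by a different route: the paper rewrites $\varphi$ algebraically into the guarded form $A \triangleq \bigl[(V_1=\fin{c_1}) \wedge \dots \wedge (V_n=\fin{c_n})\bigr] \to \widehat{\varphi}$, conjoins $A^\gamma$ over all $\gamma \in Sym(\s)$ (exploiting $\widehat{\varphi}^\gamma \same \widehat{\varphi}$ since $\widehat{\varphi}$ has no $\s$-constants), and then observes by a counting argument that the resulting conjunction of permuted antecedents is exactly the universal quantification with a distinctness guard. You instead argue semantically in two directions: forward by instantiating the universal at $X_i \mapsto \gamma(c_i)$, and backward by extending the partial injection $c_i \mapsto d_i$ to a full permutation. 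Both hinge on the same fact --- the map $\gamma \mapsto (\gamma(c_1),\dots,\gamma(c_n))$ from $Sym(\s)$ onto distinct $n$-tuples is surjective --- but your explicit construction of the extending permutation is actually slightly more rigorous than the paper's counting of $\binom{|\s|}{n}\,n!$ assignments, which establishes matching cardinalities rather than surjectivity directly.

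One small misattribution worth flagging: you claim the hypothesis $\numc(\varphi,\s) < |\s|$ is what ``makes universal quantification with a distinctness guard logically equivalent to, rather than merely an over-approximation of, the orbit,'' because it provides slack for the extension. That is not quite right --- the extension argument works for $k = |\s|$ too (the partial map is then already a bijection, with nothing to extend), and the guarded-$\forall$ form would still be logically equivalent to the orbit in that case. The real role of the hypothesis is the one you state at the outset: it is the condition under which \symic's inference procedure (Section~\ref{sec:quantinfer_main}.A) actually produces this $\forall$-with-distinct form rather than falling into Case B, and it is what keeps the number of quantified variables bounded independently of $|\s|$. This is a point about the algorithm and compactness, not about the extension lemma. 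The proof itself is unaffected by this mischaracterization.
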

\begin{proof}
Let $\gamma$ be any permutation in $Sym(\s)$, and let $n \triangleq \numc(\varphi, \s)$.
Let $\phih$ be the clause obtained by replacing in $\varphi$ each constant $\fin{c_i} \in \s$ by a corresponding variable $V_i$ of sort $\s$.

Let $A \triangleq [~(V_1=\fin{c_1}) \wedge \dots \wedge (V_n=\fin{c_n})~] \to \widehat{\varphi}$.
By the transitivity of equality, $A \same \varphi$. 
Let $B \triangleq \bigwedge \limits_{\gamma  \in Sym(\s)} A^{\gamma}$. Since $A \same \varphi$, therefore, $B \same \varphi^{L(Sym(\s))}$, and can be re-written as:
\begin{small}
\begin{align}
B &= \bigwedge \limits_{\gamma  \in Sym(\s)} \big(~[~(V_1=\fin{c_1}) \wedge \dots \wedge (V_n=\fin{c_n})~] \to \widehat{\varphi}~\big)^{\gamma} \label{eq:conjunct_s2}\\
&= \bigwedge \limits_{\gamma  \in Sym(\s)} [~(V_1=\fin{c_1}) \wedge \dots \wedge (V_n=\fin{c_n})~]^{\gamma} \to \phih \label{eq:conjunct_s3}\\
&= \forall~V_1 \dots V_n.~(\text{distinct}~V_1 \dots V_n) \to \phih \label{eq:conjunct_s4}\\
&= \Phi(\s) \label{eq:conjunct_s5}
\end{align}
\end{small}

\noindent (\ref{eq:conjunct_s2}) \& (\ref{eq:conjunct_s3}) are equal since $\phih$ does not contain any constant of sort $\s$, and hence $\left[\phih\right]^{\gamma} \same \phih$.
\noindent (\ref{eq:conjunct_s3}) \& (\ref{eq:conjunct_s4}) are equal since the antecedents in (\ref{eq:conjunct_s3}) cover all possible assignments of variables $(V_1,\dots,V_n)$ to $n$ distinct constants of sort $\s$. There are total ${|\s| \choose n} \times n!$ possible assignments of the variables in (\ref{eq:conjunct_s4}) to $n$ distinct constants of sort $\s$, one each corresponding to the ${|\s| \choose n} \times n!$ permutations in $Sym(\s)$ that yield a logically-distinct antecedent in (\ref{eq:conjunct_s3}). 
\noindent (\ref{eq:conjunct_s4}) \& (\ref{eq:conjunct_s5}) are equal since given $\numc(\varphi, \s) < |\s|$.

\noindent Since $B \same \varphi^{L(Sym(\s))}$, therefore $\Phi(\s) \same \varphi^{L(Sym(\s))}$.
\qed

\end{proof}

\vspace{10pt}
\begin{theorem}
\label{lemma:correct2_new2}
Given a finite instance $\fpro$, let $\varphi$ be such that all constants of a sort $\s \in S$ appear identically in the literals of $\varphi$. Let $\Phi(\s)$ be the quantified predicate obtained by applying \symic's quantifier inference for $\s$. $\Phi(\s)$ is logically equivalent to $\varphi^{L(Sym(\s))}$.
\end{theorem}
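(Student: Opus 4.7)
The plan is to follow the structure of the proof of Theorem~\ref{lemma:correct2_new} but to exploit the fact that the single-cell partition forces the logical orbit to be trivial so that a \emph{single} existential quantifier suffices. First I would note that, since every pair of distinct $\s$-constants is identically-present in $\varphi$ and transpositions generate $Sym(\s)$, every $\gamma \in Sym(\s)$ satisfies $\varphi^{L(\gamma)} \same \varphi$. Therefore $\varphi^{L(Sym(\s))} = \bigwedge_{\gamma \in Sym(\s)} \varphi^{L(\gamma)} \same \varphi$: the entire conjunctive logical orbit collapses to $\varphi$.

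Next I would spell out what the inference in \S\ref{sec:quantinfer_main}, case B.I emits. Pick any representative constant $c^* \in \s$ and let $\phih(X)$ denote the formula obtained from $\varphi$ by replacing every occurrence of $c^*$ with a fresh variable $X$ of sort $\s$; then $\Phi(\s) = \exists X \in \s.~\phih(X)$. Because $\s$ is finite in $\fpro$, this unfolds to $\Phi(\s) \same \bigvee_{c \in \s} \phih(c)$. By construction $\phih(c^*) = \varphi$ and, for $c \neq c^*$, $\phih(c)$ coincides (as a clause) with the transposition image $\varphi^{(c^*\,c)}$. Invoking the invariance established in the first step gives $\phih(c) \same \varphi$ for every $c \in \s$, and then idempotence of $\vee$ collapses the disjunction: $\Phi(\s) \same \varphi \same \varphi^{L(Sym(\s))}$.

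The subtle point, and what I expect to be the main obstacle, is the case where a literal of $\varphi$ carries more than one $\s$-argument (for example a literal $R(c^*,c_j)$ with both arguments of sort $\s$). A naive ``replace $c^*$ by $X$'' leaves $c_j$ in place, so the instantiation $X := c_j$ produces $R(c_j,c_j)$, which is \emph{not} the literal one would obtain by applying the transposition $(c^*\,c_j)$ to $\varphi$. The clean way to dispose of this is to interpret $\phih(c)$ as the clause-level image under the full renaming $c^* \mapsto c$ rather than a positional substitution; the identical-presence hypothesis then guarantees the resulting clause is logically equivalent to $\varphi$. Equivalently, one argues that under the single-cell hypothesis the $\s$-literals of $\varphi$ partition into $Sym(\s)$-orbits, each of which factors through a single existential. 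Either bookkeeping convention reduces the rest of the argument to routine Boolean manipulation analogous to steps (\ref{eq:conjunct_s2})--(\ref{eq:conjunct_s5}) in the proof of Theorem~\ref{lemma:correct2_new}.
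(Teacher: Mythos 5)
Your first step is fine and matches the paper: a single-cell partition means every transposition of $\s$-constants leaves $\varphi$ logically unchanged, so $\varphi^{L(Sym(\s))} \same \varphi$. The problem lies in your second step, where you misdescribe what \symic's inference actually produces. You define $\phih(X)$ by replacing only the occurrences of one chosen representative $c^*$ with $X$, leaving all other $\s$-constants of the clause in place, and set $\Phi(\s) = \exists X.~\phih(X)$. That predicate still mentions $|\s|-1$ constants of $\s$, so it is not compact and is not the formula the paper's procedure emits. The paper instead splits $\varphi$ into the $\s$-independent part $\varphi_{others}$ and the sub-clauses $\varphi_\s(\fin{c_i})$ of literals containing $\fin{c_i}$, shows that abstracting the parameterizing constant to a variable $V$ gives a single common $\widehat{\varphi_\s}$, and outputs $\Phi(\s) = \varphi_{others} \vee \exists V.~\widehat{\varphi_\s}$ with no $\s$-constants at all.

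Moreover your key intermediate claim — that $\phih(c)$ ``coincides (as a clause) with the transposition image $\varphi^{(c^*\,c)}$'' — is simply false, and not only for multi-argument literals. Take $\varphi_1 = vote(\fin{n_1,v_1}) \vee vote(\fin{n_1,v_2}) \vee vote(\fin{n_1,v_3})$ with $c^* = \fin{v_1}$: then $\phih(\fin{v_2}) = vote(\fin{n_1,v_2}) \vee vote(\fin{n_1,v_2}) \vee vote(\fin{n_1,v_3})$, which has \emph{dropped} the $\fin{v_1}$ literal, whereas $\varphi_1^{(\fin{v_1\,v_2})} = vote(\fin{n_1,v_2}) \vee vote(\fin{n_1,v_1}) \vee vote(\fin{n_1,v_3}) \same \varphi_1$ retains it. So $\phih(c)$ is generally a strict sub-clause of $\varphi$, not an orbit image; the disjunction $\bigvee_c \phih(c)$ happens to equal $\varphi$ only because the representative instantiation $\phih(c^*)$ already gives back all of $\varphi$, not for the reason you cite. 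Your closing suggestion to reinterpret $\phih(c)$ as ``the clause-level image under the full renaming $c^* \mapsto c$'' cannot be realized as a single first-order body with a free variable $X$, so it does not salvage the argument: you need the paper's decomposition into $\varphi_{others}$ and $\widehat{\varphi_\s}$, where the substitution is applied uniformly to each $\s$-parameterized sub-clause rather than to the whole of $\varphi$ at once.
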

\begin{proof}
Let $\gamma$ be any permutation in $Sym(\s)$.
Since given all constants in sort $\s$ appear identically in the literals of $\varphi$, therefore $\pi(\varphi, \s)$ consists of a single cell, and any permutation $\gamma \in Sym(\s)$ does not result in a new logically-distinct clause, i.e., $\varphi^{\gamma} \same \varphi$. As a result, $\varphi^{L(Sym(\s))} \same \varphi$.

\noindent Without loss of generality, $\varphi$ can be written as:
\begin{small}
\begin{align}
& \varphi = \varphi_{others} \vee \bigvee \limits_{\fin{c_i} \in \s} \varphi_\s(\fin{c_i}) \label{eq:exists_s0}
\end{align}
\end{small}
where $\varphi_{others}$ is the disjunction of literals in $\varphi$ that do not contain any constant of sort $\s$, and $\varphi_\s(\fin{c_i})$ is the disjunction of literals in $\varphi$ that contain a constant $\fin{c_i} \in \s$. Note that $\varphi_{others}$ can be $\perp$.

Let $\widehat{\varphi_\s}$ be the clause obtained by replacing in $\varphi_\s(\fin{c_i})$ each constant $\fin{c_i} \in \s$ by a variable $V$ of sort $\s$. Note that since all constants of sort $\s$ appear identically in the literals of $\varphi$, therefore $\widehat{\varphi_\s}$ is the same for each $\fin{c_i} \in \s$.
The clause $\varphi$ can therefore be re-written as:
\begin{small}
\begin{align}
\varphi 
&= \varphi_{others} \vee \bigvee \limits_{\fin{c_i} \in \s} (V = \fin{c_i}) \to \widehat{\varphi_\s} \label{eq:exists_s1} \\
&= \varphi_{others} \vee \exists~V.~~\widehat{\varphi_\s} \label{eq:exists_s2} \\
&= \Phi(\s) \label{eq:exists_s3}
\end{align}
\end{small}

\noindent (\ref{eq:exists_s0}) \& (\ref{eq:exists_s1}) are equal due to the transitivity of equality.
\noindent (\ref{eq:exists_s1}) \& (\ref{eq:exists_s2}) are equal since expanding the existential quantifier as a disjunction over all possible assignments of the variable $V$ gives the expression in (\ref{eq:exists_s1}).
\noindent (\ref{eq:exists_s2}) \& (\ref{eq:exists_s3}) are equal since $\numc(\varphi, \s) = |\s|$ and $|\pi(\varphi, \s) = 1|$, and hence \symic~infers $\Phi(\s)$ as (\ref{eq:exists_s2}).
\noindent Since $\varphi \same \varphi^{L(Sym(\s))}$, therefore $\Phi(\s) \same \varphi^{L(Sym(\s))}$.
\qed

\end{proof}

\clearpage
\onecolumn
\section{Simple Enhancements to the \icpo~Algorithm}
\label{sec:opt}
This section describes simple enhancements to \symic learning as mentioned in Section~\ref{sec:ic3po_highlevel}.

\subsection{Antecedent Reduction}
\label{sec:otheropt}
\noindent
Antecedent reduction strengthens a quantified predicate $\Phi$ by dropping the antecedent $(\text{distinct}~\dots)$ and checking the unsatisfiability of the query [~$F_{i-1} \wedge \hat{T} \wedge \neg \quant'$~]. For example, $\quant_2$ from (\ref{eq:ex_quant_clause2}) can possibly be strengthened by dropping $(\text{distinct}~X_1~X_2)$ from the antecedent to get $\quant_{new}$, if the query [~$F_{i-1} \wedge \hat{T} \wedge \neg \quant_{new}'$~] is unsatisfiable, where
{\small
\begin{equation*}
\Phi_{new} =~ \forall X_1, X_2 \in \fin{value}.~ \neg decision(X_1) \vee decision(X_2)
\end{equation*}
}%
If instead, the query is satisfiable, the original predicate $\Phi_2$ should be learnt.
    
\noindent
\subsection{EPR Reduction}
\label{sec:epr}
\noindent
With the quantifier inference employed by \symboost~(Algorithm~\ref{alg:symgen}), \symic~can produce predicates with alternating quantifiers, which can result in quantifier-alternation cycles. For example, our running example already includes a quantifier alternation from $\fin{quorum} \longrightarrow \fin{node}$ (Figure~\ref{fig:tla_naive}, line 3). 
Consider an example predicate:
{\small
\begin{equation}
\quant =~\forall Y \in \fin{node},~\exists Z \in \fin{quorum}.~ member(Y, Z) \nonumber
\end{equation}
}%
The quantified predicate $\Phi$ adds the arc $\fin{node} \longrightarrow \fin{quorum}$, generating a quantifier-alternation cycle: $$\fin{quorum} \longrightarrow \fin{node} \longrightarrow \fin{quorum}$$

Even though there are no undecidability concerns while reasoning over the finite instance $\fpro$ (since the sort domains are finite), it is desirable to avoid quantifier-alternation cycles and derive the invariant in the EPR fragment~\cite{piskac2010deciding} of FOL. Restricting to the EPR fragment allows robustly checking the inductive invariant over the unbounded protocol $\pro$. Note that \icpo~performs invariant construction as well as finite convergence checks both in a finite domain (as detailed in Section~\ref{sec:ic3po_highlevel}).

We can additionally strengthen the learning to be within the EPR fragment, by \textit{pushing out} existential quantifiers and avoid generation of quantifier-alternation cycle. For example, the EPR-reduced version $\quant_{epr}$ of $\quant$ is
{\small
\begin{equation}
\quant_{epr} =~\exists Z \in \fin{quorum},~\forall Y \in \fin{node}.~ member(Y, Z) \nonumber
\end{equation}
}%
If we consider both $\Phi$ and its negation $\neg \Phi$ (as needed during induction checks), EPR-reduction basically \textit{flips} the quantifier-alternation arcs. For example, the quantifier-alternation graph with the EPR-reduced predicate $\Phi_{epr}$ (instead of $\Phi$) is:
$$\fin{quorum} \longrightarrow \fin{node} \longleftarrow \fin{quorum}$$
$\neg \Phi_{epr}$ adds the arc $\fin{node} \longleftarrow \fin{quorum}$.

Logically, \textit{pushing out} the existential quantifier results in a \textit{reduced}/\textit{stricter} formula, with $\quant_{epr} \to \quant$, but $\quant \centernot\to \quant_{epr}$ (hence we call it EPR ``reduction''). Intuitively, this difference is analogous to the difference in the statements:
\begin{small}
\begin{equation*}
{Likes}_{\forall\exists} :=~\text{Everyone likes someone} \qquad {Likes}_{\exists\forall} :=~\text{Someone is liked by everyone}
\end{equation*}
\end{small}
where ${Likes}_{\exists\forall} \to {Likes}_{\forall\exists}$, but ${Likes}_{\forall\exists} \centernot\to {Likes}_{\exists\forall}$.

We can add EPR reduction in the incremental induction procedure with \symic, that enables learning the EPR-reduced form $\quant_{epr}$ instead of $\quant$ \textit{only when it is safe}, i.e., only when $\neg \quant_{epr}$ is still unreachable from the previous incremental induction frame $F_{i-1}$. We do so by checking the unsatisfiability of the finite domain (and hence decidable) query [~$F_{i-1} \wedge \hat{T} \wedge \neg \quant_{epr}'$~]. If the query is unsatisfiable, we learn the strengthened EPR-reduced predicate $\Phi_{epr}$. Else, the original form, i.e., $\Phi$, is learnt.

\vspace{20pt}
\noindent Note- Both simple enhancements presented in this section were left disabled in \icpo for all experiments in this paper to focus the evaluation on the main paper contents. Initial investigation with these enhancements shows significant benefits in performance and robustness, with hardly any overhead.

\clearpage
\onecolumn
\section{Effect of Symmetry Learning in Incremental Induction}
\label{sec:eval_symboost}
\noindent
This section evaluates the effect of symmetry-aware clause boosting in finite-domain incremental induction with a detailed comparison between \icpo and \ifour.

Table~\ref{tab:table5} compares the effect of symmetry-aware learning in incremental induction for the problems solved by both \icpo~and \ifour. The table compares the number of SMT solver calls made and counterexamples-to-induction (CTI) encountered during the incremental induction procedure, as well as the number of assertions in the final (quantified) inductive invariant. \symic's symmetry boosting helps \icpo~to make orders of magnitude fewer SMT solver calls compared to \ifour and solve the problem after discovering many fewer CTIs.

Overall, Table~\ref{tab:table5} justifies the runtime speedups observed in Table~\ref{tab:table1}, and confirms the benefits of symmetry-aware learning.

\begin{table}[H]
\small
\setlength\tabcolsep{3pt}

\begin{center}
\resizebox{\textwidth}{!}{
\begin{tabular}{l|r|r|r|r|r|r}
        \multicolumn{1}{c}{} & \multicolumn{3}{c|}{\icpo} & \multicolumn{3}{c}{\ifour} \\
        \hline
        \multicolumn{1}{c|}{Protocol (\#$13$)} & \multicolumn{1}{c|}{~~~\#SMT~~~} & \multicolumn{1}{c|}{~~\#CTI~~} & \multicolumn{1}{c|}{~~\#Inv~~} & \multicolumn{1}{c|}{~~~\#SMT~~~} & \multicolumn{1}{c|}{~~\#CTI~~} & \multicolumn{1}{c}{~~\#Inv~~} \\
	\hline
	tla-consensus $\hfill$  & 13 & 0 & 1 & 7 & 0 & 1 \\
	i4-lock-server $\hfill$  & 31 & 1 & 2 & 35 & 2 & 2 \\
	ex-quorum-leader-election $\hfill$  & 117 & 7 & 5 & 15429 & 847 & 14 \\
	tla-simple $\hfill$  & 273 & 23 & 3 & 1319 & 41 & 3 \\
	ex-lockserv-automaton $\hfill$  & 568 & 51 & 12 & 1731 & 156 & 15 \\
	pyv-sharded-kv $\hfill$  & 572 & 25 & 8 & 2101 & 170 & 15 \\
	pyv-lockserv $\hfill$  & 676 & 58 & 12 & 1606 & 142 & 15 \\
	i4-learning-switch $\hfill$  & 567 & 32 & 9 & 26345 & 1310 & 11 \\
	ex-simple-decentralized-lock $\hfill$  & 2155 & 87 & 15 & 5561 & 490 & 22 \\
	i4-two-phase-commit $\hfill$  & 2131 & 68 & 11 & 4045 & 288 & 16 \\
	pyv-consensus-wo-decide $\hfill$  & 1866 & 141 & 9 & 41137 & 2451 & 42 \\
	pyv-consensus-forall $\hfill$  & 3423 & 247 & 10 & 156838 & 10316 & 44 \\
	pyv-learning-switch $\hfill$  & 3352 & 112 & 13 & 51021 & 3639 & 49 \\
	\hline
	\multicolumn{1}{l}{$\sum$ \#SMT} & \multicolumn{1}{|c}{15744} & \multicolumn{2}{l}{(\textbf{19.5x} better)} & \multicolumn{3}{|c}{307175}\\
	\multicolumn{1}{l}{$\sum$ \#CTI} & \multicolumn{1}{|c}{852} & \multicolumn{2}{l}{(\textbf{23.3x} better)} & \multicolumn{3}{|c}{19852}\\
	\multicolumn{1}{l}{$\sum$ \#Inv} & \multicolumn{1}{|c}{110} & \multicolumn{2}{l}{(\textbf{2.3x} better)} & \multicolumn{3}{|c}{249}\\
	\hline
\end{tabular}
}
\captionsetup{justification=centering}
\caption{
Comparison of different incremental induction metrics between \icpo~and \ifour for the problems solved by both  \\
\#SMT: number of solver queries, \#CTI: number of counterexamples-to-induction \\
\#Inv: number of assertions in the final (quantified) inductive invariant
}
\label{tab:table5}
\end{center}
\vspace{-10pt}
\end{table}

\clearpage
\onecolumn
\section{Statistical Analysis with Multiple SMT Solver Seeds}
\label{sec:eval_statistical}
\noindent
This section provides a statistical analysis of the experiments from Section~\ref{sec:evaluation} through multiple runs for each tool with different solver seeds.

Different tools perform best with different SMT solvers (e.g., \ifour uses a combination of Yices 2~\cite{dutertre2014yices} and Z3~\cite{demoura2008z3}, \folic uses Z3 and CVC4~\cite{BCD_11}, while \updr and \icpo~use Z3).\footnote{We used Yices 2 version 2.6.2, Z3 version 4.8.9 and CVC4 version 1.7.} 
For the results presented in Table~\ref{tab:table1}, a fixed SMT solver seed (i.e., $seed = 1$) was used for all tools. To get an idea of the effect of randomness in SMT solving, we performed $10$ runs with different solver seeds for each tool on all protocols, and compared the runtime mean and standard deviation.


\begin{table}[H]
\small
\setlength\tabcolsep{3pt}
\begin{center}
\resizebox{\textwidth}{!}{
\begin{tabular}{l|rrr|rrr|rrr|rrr}
         \multicolumn{1}{c}{} & \multicolumn{3}{c}{\icpo} & \multicolumn{3}{c}{\ifour} & \multicolumn{3}{c}{\updr} & \multicolumn{3}{c}{\folic} \\
        \hline
        \multicolumn{1}{c}{Protocol (\#29)} & \multicolumn{1}{|c}{\#} & \multicolumn{1}{c}{Time} & \multicolumn{1}{c}{$\sigma$} & \multicolumn{1}{|c}{\#} & \multicolumn{1}{c}{Time} & \multicolumn{1}{c}{$\sigma$} & \multicolumn{1}{|c}{\#} & \multicolumn{1}{c}{Time} & \multicolumn{1}{c}{$\sigma$} & \multicolumn{1}{|c}{\#} & \multicolumn{1}{c}{Time} & \multicolumn{1}{c}{$\sigma$} \\
        
	\hline
	tla-consensus $\hfill$  &  \cmark & 0 & 0 &  \cmark & 5 & 0 &  \cmark & \textbf{0} & 0 &  \cmark & 1 & 0 \\
	tla-tcommit $\hfill$  &  \cmark & \textbf{1} & 0 &  \xmark &  & &  \cmark & 1 & 0 &  \cmark & 2 & 0 \\
	i4-lock-server $\hfill$  &  \cmark & \textbf{1} & 0 &  \cmark & 2 & 0 &  \cmark & 1 & 0 &  \cmark & 1 & 0 \\
	ex-quorum-leader-election $\hfill$  &  \cmark & \textbf{3} & 0 &  \cmark & 32 & 0 &  \cmark & 10 & 1 &  \cmark & 21 & 3 \\
	pyv-toy-consensus-forall $\hfill$  &  \cmark & \textbf{3} & 1 &  \xmark &  & &  \cmark & 6 & 1 &  \cmark & 11 & 1 \\
	tla-simple $\hfill$  &  \cmark & 34 & 93 &  \cmark & 5 & 0 &  \xmark &  & & 2 & \textbf{3} & 0 \\
	ex-lockserv-automaton $\hfill$  &  \cmark & 9 & 3 &  \cmark & \textbf{3} & 0 &  \cmark & 21 & 1 &  \cmark & 11 & 0 \\
	tla-simpleregular $\hfill$  &  \cmark & \textbf{8} & 4 &  \xmark &  & &  \xmark &  & &  \cmark & 79 & 22 \\
	pyv-sharded-kv $\hfill$  &  \cmark & 8 & 1 &  \cmark & \textbf{4} & 0 &  \cmark & 6 & 0 &  \cmark & 22 & 0 \\
	pyv-lockserv $\hfill$  &  \cmark & 11 & 4 &  \cmark & \textbf{3} & 0 &  \cmark & 15 & 2 &  \cmark & 8 & 0 \\
	tla-twophase $\hfill$  &  \cmark & \textbf{15} & 3 &  \xmark &  & &  \cmark & 99 & 12 &  \cmark & 16 & 8 \\
	i4-learning-switch $\hfill$  &  \cmark & \textbf{20} & 8 &  \cmark & 22 & 0 &  \xmark &  & &  \xmark &  & \\
	ex-simple-decentralized-lock $\hfill$  &  \cmark & 20 & 0 &  \cmark & 14 & 0 &  \cmark & \textbf{4} & 0 &  \cmark & 4 & 0 \\
	i4-two-phase-commit $\hfill$  &  \cmark & 79 & 167 &  \cmark & \textbf{4} & 0 &  \cmark & 19 & 3 &  \cmark & 9 & 0 \\
	pyv-consensus-wo-decide $\hfill$  &  \cmark & \textbf{40} & 9 &  \cmark & 1226 & 37 &  \cmark & 107 & 16 &  \cmark & 82 & 45 \\
	pyv-consensus-forall $\hfill$  &  \cmark & \textbf{135} & 72 &  \cmark & 1042 & 36 &  \cmark & 398 & 86 &  \cmark & 2277 & 553 \\
	pyv-learning-switch $\hfill$  &  \cmark & \textbf{161} & 66 &  \cmark & 387 & 17 &  \cmark & 209 & 56 & 1 & 311 & 0 \\
	i4-chord-ring-maintenance $\hfill$  & 8 & \textbf{1289} & 1191 &  \xmark &  & &  \xmark &  & &  \xmark &  & \\
	pyv-sharded-kv-no-lost-keys $\hfill$  &  \cmark & \textbf{2} & 0 &  \xmark &  & &  \xmark &  & &  \cmark & 5 & 1 \\
	ex-naive-consensus $\hfill$  &  \cmark & \textbf{5} & 1 &  \xmark &  & &  \xmark &  & &  \cmark & 80 & 17 \\
	pyv-client-server-ae $\hfill$  &  \cmark & \textbf{1} & 0 &  \xmark &  & &  \xmark &  & &  \cmark & 630 & 130 \\
	ex-simple-election $\hfill$  &  \cmark & 172 & 522 &  \xmark &  & &  \xmark &  & &  \cmark & \textbf{38} & 8 \\
	pyv-toy-consensus-epr $\hfill$  &  \cmark & \textbf{14} & 8 &  \xmark &  & &  \xmark &  & &  \cmark & 47 & 12 \\
	ex-toy-consensus $\hfill$  &  \cmark & \textbf{11} & 5 &  \xmark &  & &  \xmark &  & &  \cmark & 22 & 4 \\
	pyv-client-server-db-ae $\hfill$  &  \cmark & \textbf{32} & 30 &  \xmark &  & &  \xmark &  & &  \xmark &  & \\
	pyv-hybrid-reliable-broadcast $\hfill$  & 6 & \textbf{157} & 211 &  \xmark &  & &  \xmark &  & & 6 & 2264 & 740 \\
	pyv-firewall $\hfill$  &  \cmark & \textbf{2} & 0 &  \xmark &  & &  \xmark &  & &  \cmark & 6 & 1 \\
	ex-majorityset-leader-election $\hfill$  &  \cmark & \textbf{63} & 47 &  \xmark &  & &  \xmark &  & &  \xmark &  & \\
	pyv-consensus-epr $\hfill$  & 2 & 1968 & 943 &  \xmark &  & &  \xmark &  & & 5 & \textbf{768} & 404 \\
	\hline
	\multicolumn{1}{l}{No. of problems solved \hfill (out of 29)} & \multicolumn{3}{|c}{\textbf{29}} & \multicolumn{3}{|c}{13} & \multicolumn{3}{|c}{14} & \multicolumn{3}{|c}{25}\\
	\multicolumn{1}{l}{Uniquely solved} & \multicolumn{3}{|c}{\textbf{3}} & \multicolumn{3}{|c}{0} & \multicolumn{3}{|c}{0} & \multicolumn{3}{|c}{0}\\
	\hline
	\multicolumn{1}{l}{For $11$ cases solved by all: \hfill $\sum$ Time } & \multicolumn{3}{|c}{\textbf{470}} & \multicolumn{3}{|c}{2727} & \multicolumn{3}{|c}{795} & \multicolumn{3}{|c}{2752}\\
	\hline
\end{tabular}
}
\captionsetup{justification=centering}
\caption{Statistical comparison of \icpo~against other state-of-the-art verifiers \\
\#: number of runs where successfully solved (out of 10) (\cmark~means 10, \xmark~means 0), Time: runtime mean (in seconds), $\sigma$: runtime standard deviation (in seconds)
}
\label{tab:table1_old}
\end{center}
\vspace{-20pt}
\end{table}

\clearpage
\onecolumn
\section{Comparison against Human-Written Invariants}
\label{sec:eval_human}
\noindent
Figure~\ref{plot:inv} compares \icpo's automatically-generated inductive invariants against the human-written proofs on several metrics. Our evaluation shows \icpo~produces compact proofs of sizes comparable to the manually-written inductive invariants, even shorter than the human proofs on several occasions. As a side benefit, \icpo's inductive invariants are pretty-printed in the Ivy format~\cite{ivy}, and thus, can also be independently checked/validated through Ivy.


\begin{figure}[H]
\centering
    \subfloat{
        \includegraphics[scale=0.35]{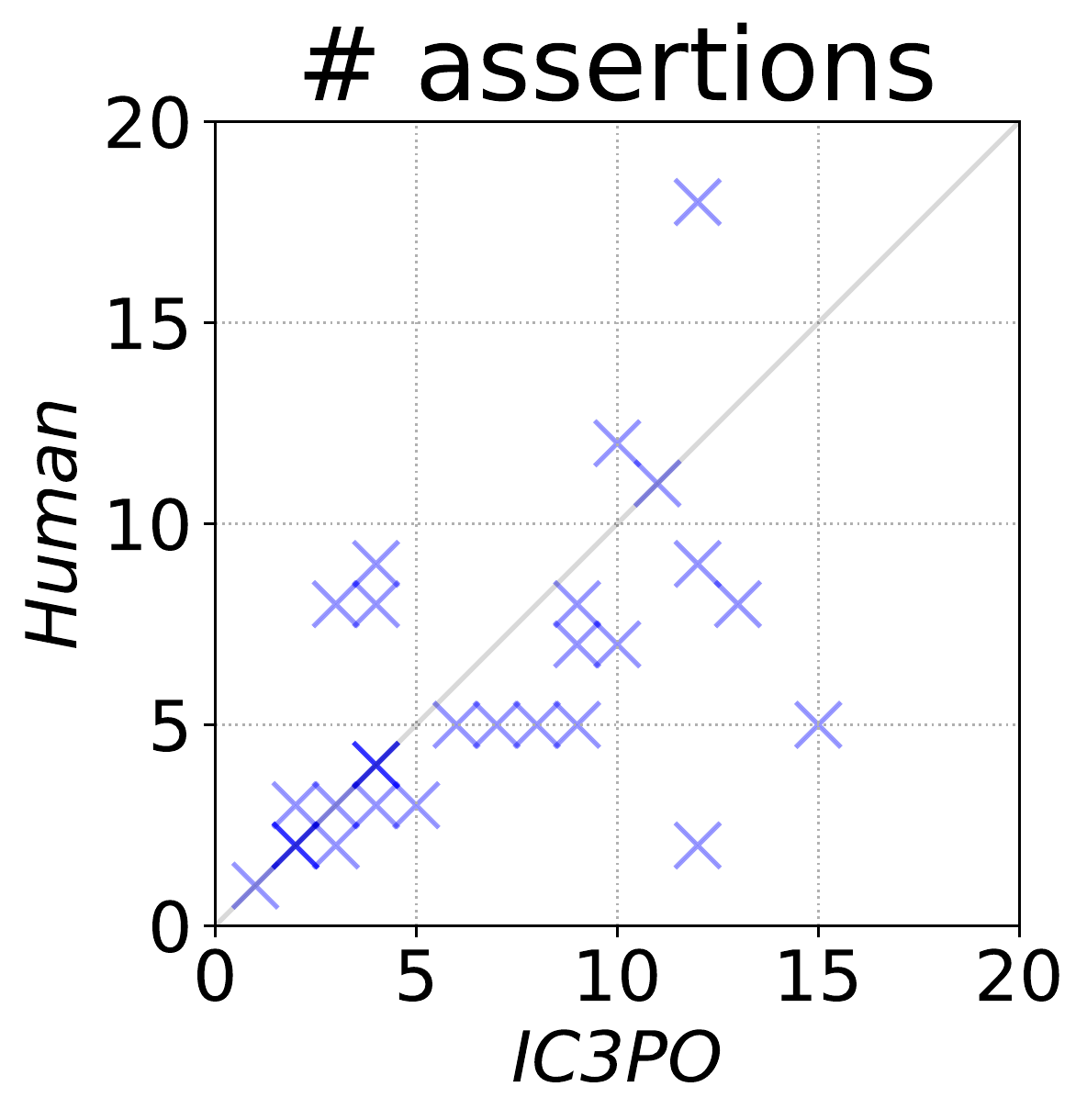}
    }
    \hfill
    \subfloat{
        \includegraphics[scale=0.35]{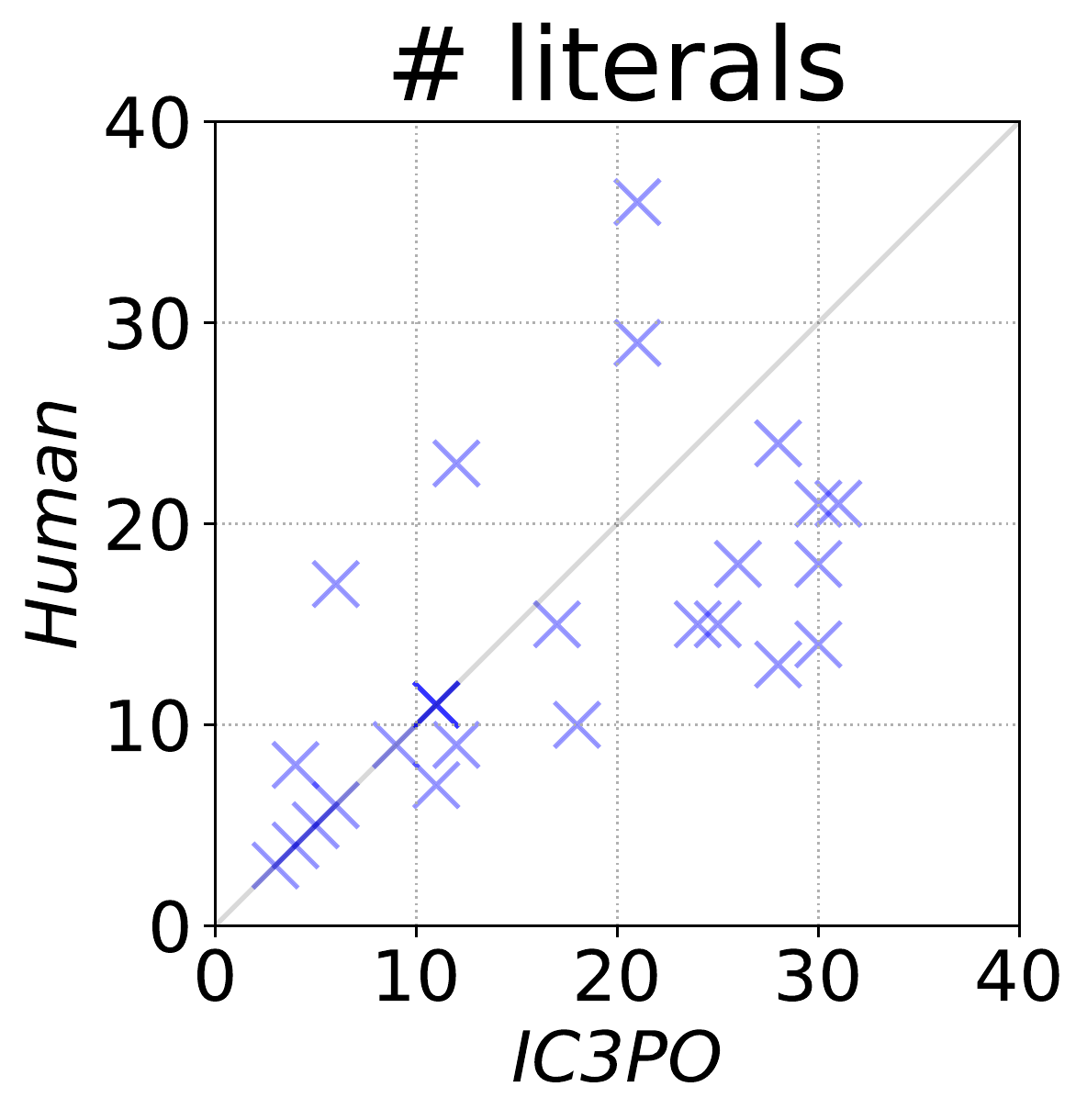}
    }
    \\
    \vspace{10pt}
    \subfloat{
        \includegraphics[scale=0.35]{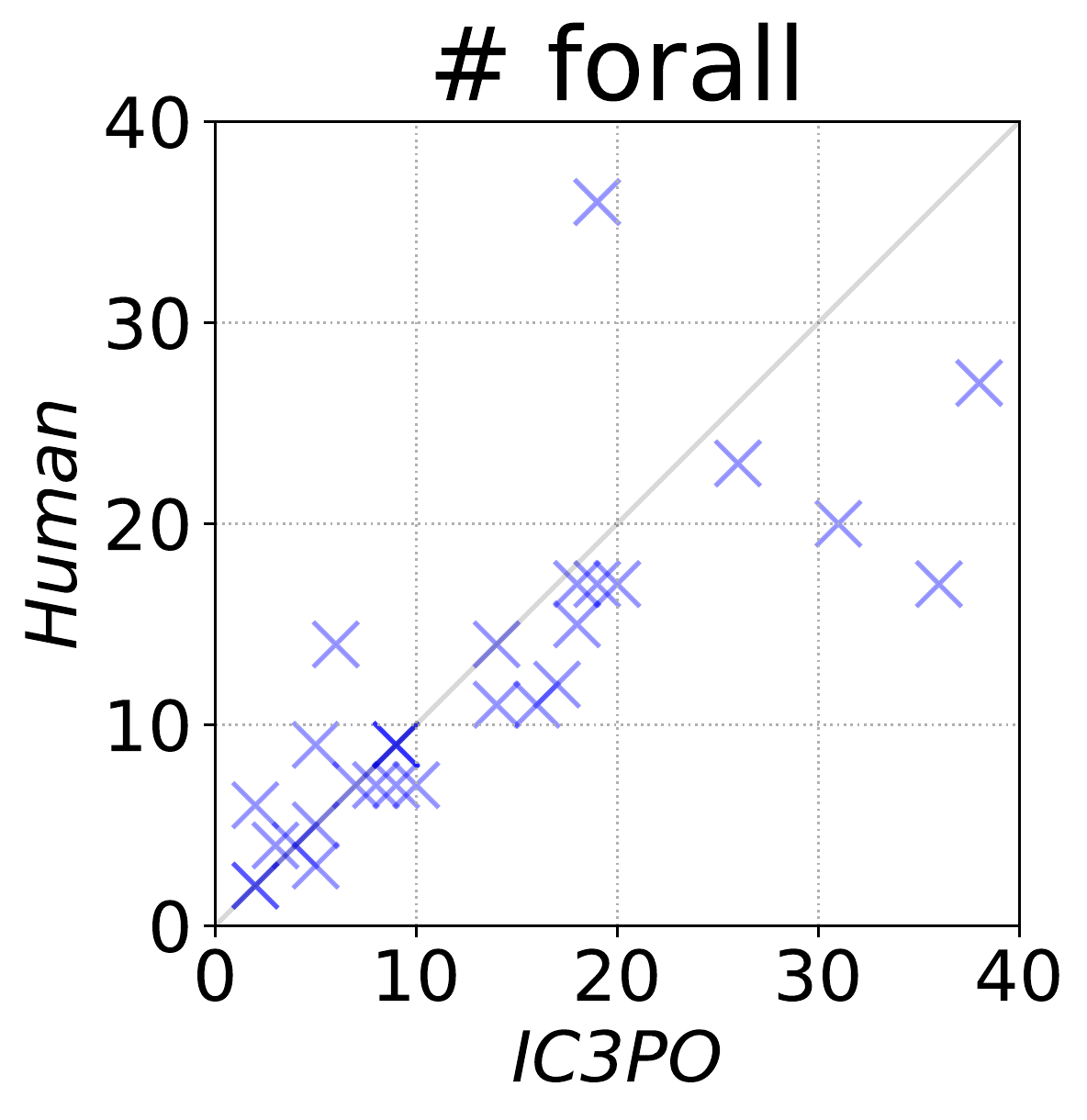}
    }
    \hfill
    \subfloat{
        \includegraphics[scale=0.35]{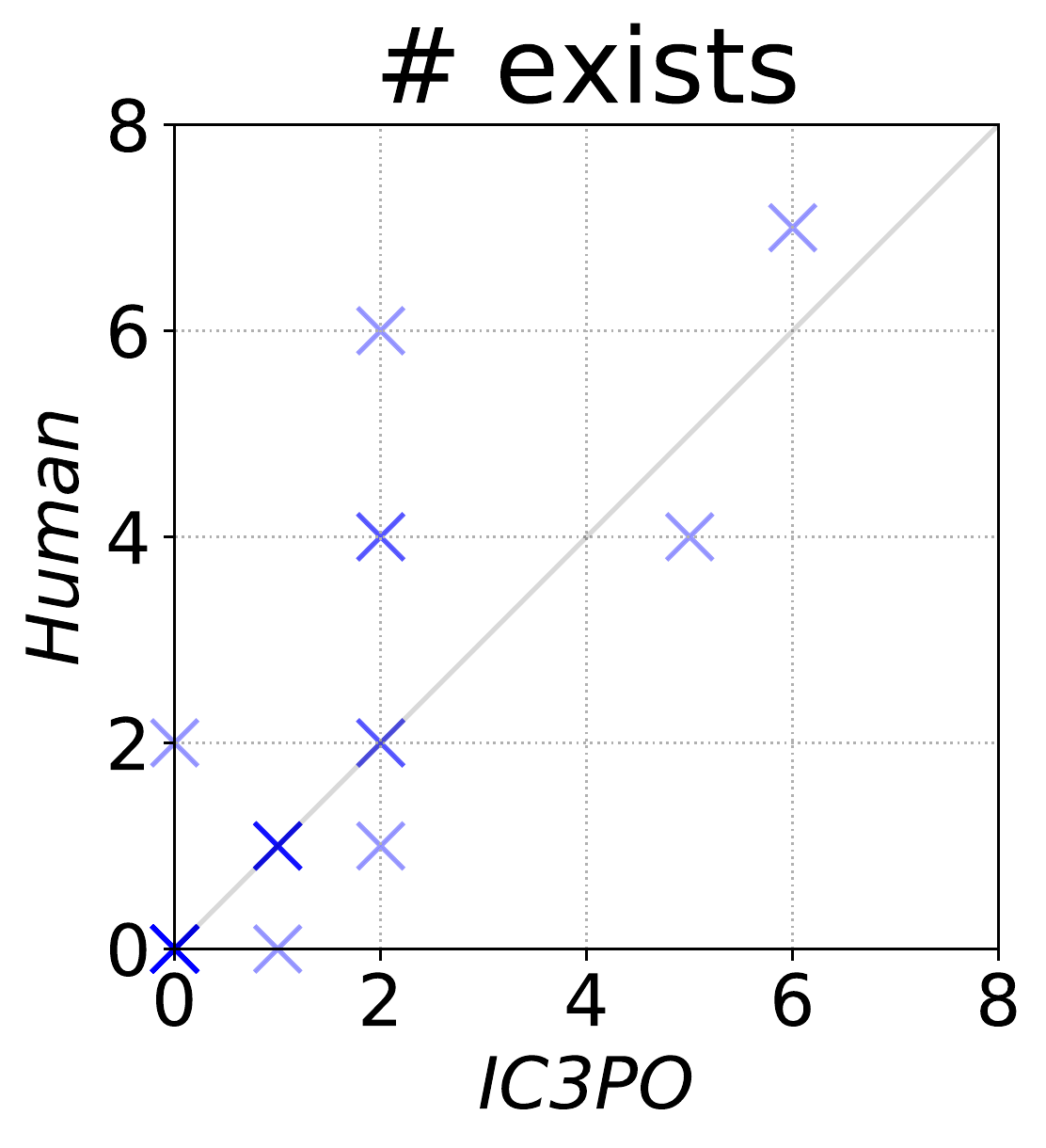}
    }
\captionsetup{justification=centering}
\caption{Comparison of \icpo's inductive invariant against \textit{human-written} proof \\
\icpo~is on x-axis, \textit{human-written} on y-axis
}
\label{plot:inv}
\end{figure}

\clearpage
\onecolumn
\section{Ordered Domains, Ring Topology and Special Variables}
\label{sec:eval_order}
\noindent
This section describes an extension to \icpo that allows handling totally-ordered domains, as well as further details relating to ring topology and special variables (along with a preliminary evaluation).

\begin{table}[H]
\small
\setlength\tabcolsep{3pt}
\begin{center}
\resizebox{\textwidth}{!}{
\begin{tabular}{l|r|rrr|rrr|rrr|rrr}
    \multicolumn{1}{c}{} & \multicolumn{1}{c}{\makebox[20pt][c]{\textit{Human}}} & \multicolumn{3}{c}{\icpo} & \multicolumn{3}{c}{\ifour} & \multicolumn{3}{c}{\updr} & \multicolumn{3}{c}{\folic} \\
    \hline
    \multicolumn{1}{c}{Protocol (\#13)} & \multicolumn{1}{|c}{Inv} & \multicolumn{1}{|c}{Time} & \multicolumn{1}{c}{Inv} & \multicolumn{1}{c}{SMT} & \multicolumn{1}{|c}{Time} & \multicolumn{1}{c}{Inv} & \multicolumn{1}{c}{SMT} & \multicolumn{1}{|c}{Time} & \multicolumn{1}{c}{Inv} & \multicolumn{1}{c}{SMT} & \multicolumn{1}{|c}{Time} & \multicolumn{1}{c}{Inv} & \multicolumn{1}{c}{SMT} \\
    
	\hline
	ex-distributed-lock-abstract $\hfill$  $<$  & 12 & \textbf{15} & 11 & 946 &  \multicolumn{2}{l}{ \textcolor{failcolor}{timeout} } & &  \multicolumn{2}{l}{ \textcolor{failcolor}{timeout} } & &  \multicolumn{2}{l}{ \textcolor{failcolor}{timeout} } & \\
	ex-decentralized-lock $\hfill$  $<$  & 4 & \textbf{25} & 5 & 654 & 288 & 32 & 104616 &  \multicolumn{2}{l}{ \textcolor{failcolor}{timeout} } & &  \multicolumn{2}{l}{ \textcolor{failcolor}{timeout} } & \\
	ex-distributed-lock-maxheld $\hfill$  $<$  & 6 & \textbf{58} & 10 & 1866 & 422 & 73 & 100749 &  \multicolumn{2}{l}{ \textcolor{failcolor}{timeout} } & & 3210 & 48 & 4557 \\
	pyv-ticket $\hfill$  $<$  & 14 & \textbf{65} & 8 & 1896 &  \multicolumn{2}{l}{ \textcolor{failcolor}{error} } & & 228 & 13 & 15936 & 98 & 26 & 3177 \\
	i4-database-chain-replication $\hfill$  $<$  & 9 & 98 & 6 & 1382 & \textbf{20} & 10 & 6111 &  \multicolumn{2}{l}{ \textcolor{failcolor}{timeout} } & & 1222 & 16 & 5455 \\
	ex-decentralized-lock-abstract $\hfill$  $<$  & 6 & \textbf{126} & 18 & 5069 &  \multicolumn{2}{l}{ \textcolor{failcolor}{error} } & &  \multicolumn{2}{l}{ \textcolor{failcolor}{timeout} } & &  \multicolumn{2}{l}{ \textcolor{failcolor}{timeout} } & \\
	i4-distributed-lock $\hfill$  $<$  & 7 & \textbf{155} & 10 & 3472 & 3280 & 102 & 410364 &  \multicolumn{2}{l}{ \textcolor{failcolor}{timeout} } & & 1191 & 64 & 4875 \\
	ex-ring-not-dead $\hfill$ $\circlearrowright$ $<$  & 2 & 10 & 2 & 161 &  \multicolumn{2}{l}{ \textcolor{failcolor}{unknown} } & 3327 &  \multicolumn{2}{l}{ \textcolor{failcolor}{unknown} } & 28 & \textbf{6} & 3 & 100 \\
	ex-ring $\hfill$  $\circlearrowright$ $<$  & 3 & 11 & 3 & 269 & \textbf{6} & 9 & 678 & 9 & 2 & 662 & 7 & 3 & 248 \\
	ex-ring-id-not-dead-limited $\hfill$ $\circlearrowright$ $<$  & 2 & 24 & 2 & 250 &  \multicolumn{2}{l}{ \textcolor{failcolor}{unknown} } & 29083 &  \multicolumn{2}{l}{ \textcolor{failcolor}{unknown} } & 31 & \textbf{7} & 3 & 81 \\
	pyv-ring-id-not-dead $\hfill$ $\circlearrowright$ $<$  & 2 & 37 & 2 & 275 &  \multicolumn{2}{l}{ \textcolor{failcolor}{unknown} } & 182325 &  \multicolumn{2}{l}{ \textcolor{failcolor}{unknown} } & 31 & \textbf{8} & 3 & 86 \\
	pyv-ring-id $\hfill$  $\circlearrowright$ $<$  & 4 & 73 & 4 & 869 & 420 & 11 & 225789 & 99 & 3 & 4107 & \textbf{28} & 9 & 594 \\
	i4-leader-election-in-ring $\hfill$  $\circlearrowright$ $<$  & 6 & 323 & 5 & 2907 & 749 & 25 & 359776 & 114 & 3 & 4229 & \textbf{59} & 17 & 1378 \\
	\hline
	\multicolumn{2}{l}{No. of problems solved \hfill (out of 13)} & \multicolumn{3}{|c}{\textbf{13}} & \multicolumn{3}{|c}{7} & \multicolumn{3}{|c}{4} & \multicolumn{3}{|c}{10}\\
	\multicolumn{2}{l}{Uniquely solved} & \multicolumn{3}{|c}{\textbf{2}} & \multicolumn{3}{|c}{0} & \multicolumn{3}{|c}{0} & \multicolumn{3}{|c}{0}\\
	\hline
	\multicolumn{2}{l}{For $3$ cases solved by all: \hfill $\sum$ Time } & \multicolumn{3}{|c}{407} & \multicolumn{3}{|c}{1176} & \multicolumn{3}{|c}{224} & \multicolumn{3}{|c}{\textbf{95}}\\
	\multicolumn{2}{l}{\hfill $\sum$ Inv \hskip 5pt ~} & \multicolumn{3}{|c}{12} & \multicolumn{3}{|c}{45} & \multicolumn{3}{|c}{\textbf{8}} & \multicolumn{3}{|c}{29}\\
	\multicolumn{2}{l}{\hfill $\sum$ SMT } & \multicolumn{3}{|c}{4045} & \multicolumn{3}{|c}{586243} & \multicolumn{3}{|c}{8998} & \multicolumn{3}{|c}{\textbf{2220}}\\
	\hline
\end{tabular}
}
\captionsetup{justification=centering}
\caption{Comparison of \icpo~against other state-of-the-art verifiers \\
Time: runtime in seconds,
Inv: \# assertions in the inductive invariant, \\
SMT: \# SMT solver queries made,
$\circlearrowright$ indicates protocol has a ring topology, $<$ indicates protocol has a totally-ordered domain
}
\label{tab:table2}
\end{center}
\vspace{-20pt}
\end{table}

Ordered domains like \textit{epoch}, \textit{time}, etc. are not symmetric, which makes such domains unsuitable to directly apply a symmetry argument. Specifically, restricting an unbounded ordered domain to a finite size results in introducing boundary cases with a ``max'' element, complicating finite-domain behavior.

Even in the presence of ordered domains, symmetry-aware learning can still be applied to all the un-ordered domains while leaving the ordered domains as unbounded. As an initial exploration, we devised a hybrid procedure in \icpo~where ordered domains are handled in an unbounded fashion, in the same manner as in \updr, while all other domains are handled in the \symic-style symmetry-aware and finite manner. We use \updr's \textit{diagram-based abstraction} to infer quantifiers for the ordered domain, while using \symboost~(Algorithm~\ref{alg:symgen}) for the un-ordered domains.\footnote{We refer the reader to~\cite{10.1145/3022187} for a complete description of incremental induction with diagram-based abstraction.}

For the protocols that involve a ring topology, a ring domain, generally composed of identical components arranged in a ring topology, retains domain symmetry since the position of each individual component in the ring is left uninitialized and can be arbitrarily permuted. Hence, \symic~can be directly applied. The same is true for protocols that have special components, like a special $start\_node$ that initially holds the lock in a distributed lock. Non-Boolean functions and variables are modeled in relational form with equality predicates. For example, permuting the predicate $(start\_node = \fin{n_1})$ with the permutation $(\fin{n_1~n_2})$ gives the permuted predicate $(start\_node = \fin{n_2})$. \icpo~exploits the symmetry in the sort domains, not symmetries over the protocol symbols (i.e., relations, functions and variables), and hence is unaffected by the presence of special protocol symbols.

Table~\ref{tab:table2} summarizes the experimental results for $13$ protocols with totally-ordered domains, collected again from~\cite{ma2019i4,pldi20folic3,ivybench}. \icpo~solves all $13$ problems and shows the advantages of symmetry-aware learning even when applied only to a subset of protocol's domains. 
We believe additional exploration is needed for these cases, where the non-symmetric regularity in totally-ordered domains can be further utilized to improve learning during incremental induction.

\clearpage
\onecolumn
\section{Finite Instance Sizes used in Experiments}
\label{app:sizes}
Table~\ref{tab:finite_sizes_ic3po} lists down the initial base instance sizes used for \icpo runs in the evaluation (Section~\ref{sec:evaluation}) for each protocol. The table also includes the final $cutoff$ instance sizes reached, where the corresponding $Inv$ generalizes/saturates to be an inductive proof for any size. Note again that \icpo updates the instance sizes automatically, as described in Section~\ref{sec:invcheck}.

\begin{table}[!b]
\small
\setlength\tabcolsep{3pt}
\begin{center}
\resizebox{0.98\textwidth}{!}{
\begin{tabular}{l|l}
    \hline
    \multicolumn{1}{c|}{Protocol} & \multicolumn{1}{c}{Finite instance sizes used for \icpo} \\
	\hline
	tla-consensus $\hfill$  & $ \fin{value} = 2$\\
	tla-tcommit $\hfill$  & $ \fin{resource\text{-}manager} = 2$\\
	i4-lock-server $\hfill$  & $ \fin{client} = 2,~\fin{server} = 1$\\
	ex-quorum-leader-election $\hfill$ $E$  & $ \fin{node} = 2 \mapsto 3,~\fin{nset} = 2$\\
	pyv-toy-consensus-forall $\hfill$ $E$  & $ \fin{node} = 2 \mapsto 3,~\fin{quorum} = 1 \mapsto 3,~\fin{value} = 2$\\
	tla-simple $\hfill$ $\circlearrowright$ $E$  & $ \fin{node} = 2,~\fin{pcstate} = 3,~\fin{value} = 2 \mapsto 3$\\
	ex-lockserv-automaton $\hfill$  & $ \fin{node} = 2$\\
	tla-simpleregular $\hfill$ $\circlearrowright$ $E$  & $ \fin{node} = 2,~\fin{pcstate} = 4,~\fin{value} = 2 \mapsto 3$\\
	pyv-sharded-kv $\hfill$  & $ \fin{key} = 2,~\fin{node} = 2,~\fin{value} = 2$\\
	pyv-lockserv $\hfill$  & $ \fin{node} = 2$\\
	tla-twophase $\hfill$  & $ \fin{resource\text{-}manager} = 2$\\
	i4-learning-switch $\hfill$  & $ \fin{node} = 2 \mapsto 3,~\fin{packet} = 1$\\
	ex-simple-decentralized-lock $\hfill$  & $ \fin{node} = 2 \mapsto 4$\\
	i4-two-phase-commit $\hfill$  & $ \fin{node} = 4$\\
	pyv-consensus-wo-decide $\hfill$ $E$  & $ \fin{node} = 2 \mapsto 3,~\fin{quorum} = 1 \mapsto 3$\\
	pyv-consensus-forall $\hfill$ $E$  & $ \fin{node} = 2 \mapsto 3,~\fin{quorum} = 1 \mapsto 3,~\fin{value} = 2$\\
	pyv-learning-switch $\hfill$ $E$  & $ \fin{node} = 2 \mapsto 4$\\
	i4-chord-ring-maintenance $\hfill$ $\circlearrowright$ $E$  & $ \fin{node} = 3 \mapsto 5$\\
	pyv-sharded-kv-no-lost-keys $\hfill$ $E$  & $ \fin{key} = 2,~\fin{node} = 2,~\fin{value} = 2$\\
	ex-naive-consensus $\hfill$ $E$  & $ \fin{node} = 3,~\fin{quorum} = 3,~\fin{value} = 3$\\
	pyv-client-server-ae $\hfill$ $E$  & $ \fin{node} = 2,~\fin{request} = 2 \mapsto 3,~\fin{response} = 2$\\
	ex-simple-election $\hfill$ $E$  & $ \fin{acceptor} = 2 \mapsto 3,~\fin{proposer} = 2,~\fin{quorum} = 1 \mapsto 3$\\
	pyv-toy-consensus-epr $\hfill$ $E$  & $ \fin{node} = 2 \mapsto 3,~\fin{quorum} = 1 \mapsto 3,~\fin{value} = 2$\\
	ex-toy-consensus $\hfill$ $E$  & $ \fin{node} = 2 \mapsto 3,~\fin{quorum} = 1 \mapsto 3,~\fin{value} = 2$\\
	pyv-client-server-db-ae $\hfill$ $E$  & $ \fin{db\text{-}request\text{-}id} = 2 \mapsto 3,~\fin{node} = 2,~\fin{request} = 2 \mapsto 3,~\fin{response} = 2$\\
	pyv-hybrid-reliable-broadcast $\hfill$ $E$  & $ \fin{node} = 2 \mapsto 3,~\fin{quorum\text{-}a} = 2 \mapsto 3,~\fin{quorum\text{-}b} = 2$\\
	pyv-firewall $\hfill$ $E$  & $ \fin{node} = 2 \mapsto 3$\\
	ex-majorityset-leader-election $\hfill$ $E$  & $ \fin{node} = 2 \mapsto 3,~\fin{nodeset} = 2 \mapsto 3$\\
	pyv-consensus-epr $\hfill$ $E$  & $ \fin{node} = 2 \mapsto 3,~\fin{quorum} = 1 \mapsto 3,~\fin{value} = 2$\\
	\hline
	ex-distributed-lock-abstract $\hfill$ $<$  & $ \fin{epoch} = \infty,~\fin{node} = 2$\\
	ex-decentralized-lock $\hfill$ $<$  & $ \fin{node} = 2,~\fin{time} = \infty$\\
	ex-distributed-lock-maxheld $\hfill$ $<$  & $ \fin{epoch} = \infty,~\fin{node} = 2$\\
	pyv-ticket $\hfill$ $<$  & $ \fin{thread} = 2 \mapsto 3,~\fin{ticket} = \infty$\\
	i4-database-chain-replication $\hfill$ $E$ $<$  & $ \fin{key} = 1,~\fin{node} = 2,~\fin{operation} = 2 \mapsto 3,~\fin{transaction} = \infty$\\
	ex-decentralized-lock-abstract $\hfill$ $<$  & $ \fin{node} = 2 \mapsto 4,~\fin{time} = \infty$\\
	i4-distributed-lock $\hfill$ $<$  & $ \fin{epoch} = \infty,~\fin{node} = 2$\\
	ex-ring-not-dead $\hfill$ $\circlearrowright$ $E$ $<$  & $ \fin{node} = 3$\\
	ex-ring $\hfill$ $\circlearrowright$ $<$  & $ \fin{node} = 3$\\
	ex-ring-id-not-dead-limited $\hfill$ $\circlearrowright$ $E$ $<$  & $ \fin{id} = 3,~\fin{node} = 3$\\
	pyv-ring-id-not-dead $\hfill$ $\circlearrowright$ $E$ $<$  & $ \fin{id} = \infty,~\fin{node} = 3$\\
	pyv-ring-id $\hfill$ $\circlearrowright$ $<$  & $ \fin{id} = \infty,~\fin{node} = 3$\\
	i4-leader-election-in-ring $\hfill$ $\circlearrowright$ $<$  & $ \fin{id} = \infty,~\fin{node} = 3$\\
	\hline
\end{tabular}
}
\captionsetup{justification=centering, belowskip=0pt}
\caption{
Finite instance sizes used for \icpo
}
\captionsetup{justification=raggedright, aboveskip=0pt}
\caption*{
   $\s = x$ denotes sort $\s$ has both initial base size and final cutoff size $x$
\\ $\s = x \mapsto y$ denotes sort $\s$ has initial size $x$ and final cutoff size $y$ (incrementally increased by \icpo~automatically)
\\ $\s = \infty$ denote the totally-ordered sort $\s$ is left unbounded
\\ $\circlearrowright$ indicates protocol has a ring topology, $<$ indicates protocol has an ordered domain \\
$E$ indicates the protocol description has $\exists$
}
\label{tab:finite_sizes_ic3po}
\end{center}
\end{table}

Table~\ref{tab:finite_sizes_i4} lists down the instance sizes used for \ifour runs in the evaluation (Section~\ref{sec:evaluation}) for each protocol.

\begin{table}[!b]
\small
\setlength\tabcolsep{3pt}
\begin{center}
\resizebox{\textwidth}{!}{
\begin{tabular}{l|l}
    \hline
    \multicolumn{1}{c|}{Protocol} & \multicolumn{1}{c}{Finite instance sizes used for \ifour} \\
	\hline
	tla-consensus $\hfill$  & $ \fin{value} = 2$\\
	tla-tcommit $\hfill$  & $ \fin{resource\text{-}manager} = 2$\\
	i4-lock-server $\hfill$  & $ \fin{client} = 2,~\fin{server} = 1$\\
	ex-quorum-leader-election $\hfill$ $E$  & $ \fin{node} = 3,~\fin{nset} = 3$\\
	pyv-toy-consensus-forall $\hfill$ $E$  & $ \fin{node} = 3,~\fin{quorum} = 3,~\fin{value} = 2$\\
	tla-simple $\hfill$ $\circlearrowright$ $E$  & $ \fin{node} = 3,~\fin{pcstate} = 3,~\fin{value} = 3$\\
	ex-lockserv-automaton $\hfill$  & $ \fin{node} = 2$\\
	tla-simpleregular $\hfill$ $\circlearrowright$ $E$  & $ \fin{node} = 3,~\fin{pcstate} = 4,~\fin{value} = 3$\\
	pyv-sharded-kv $\hfill$  & $ \fin{key} = 2,~\fin{node} = 2,~\fin{value} = 2$\\
	pyv-lockserv $\hfill$  & $ \fin{node} = 2$\\
	tla-twophase $\hfill$  & $ \fin{resource\text{-}manager} = 3$\\
	i4-learning-switch $\hfill$  & $ \fin{node} = 3,~\fin{packet} = 2$\\
	ex-simple-decentralized-lock $\hfill$  & $ \fin{node} = 4$\\
	i4-two-phase-commit $\hfill$  & $ \fin{node} = 5$\\
	pyv-consensus-wo-decide $\hfill$ $E$  & $ \fin{node} = 3,~\fin{quorum} = 3$\\
	pyv-consensus-forall $\hfill$ $E$  & $ \fin{node} = 3,~\fin{quorum} = 3,~\fin{value} = 2$\\
	pyv-learning-switch $\hfill$ $E$  & $ \fin{node} = 4$\\
	i4-chord-ring-maintenance $\hfill$ $\circlearrowright$ $E$  & $ \fin{node} = 4$\\
	pyv-sharded-kv-no-lost-keys $\hfill$ $E$  & $ \fin{key} = 3,~\fin{node} = 3,~\fin{value} = 3$\\
	ex-naive-consensus $\hfill$ $E$  & $ \fin{node} = 3,~\fin{quorum} = 3,~\fin{value} = 3$\\
	pyv-client-server-ae $\hfill$ $E$  & $ \fin{node} = 3,~\fin{request} = 3,~\fin{response} = 3$\\
	ex-simple-election $\hfill$ $E$  & $ \fin{acceptor} = 3,~\fin{proposer} = 2,~\fin{quorum} = 3$\\
	pyv-toy-consensus-epr $\hfill$ $E$  & $ \fin{node} = 3,~\fin{quorum} = 3,~\fin{value} = 2$\\
	ex-toy-consensus $\hfill$ $E$  & $ \fin{node} = 3,~\fin{quorum} = 3,~\fin{value} = 2$\\
	pyv-client-server-db-ae $\hfill$ $E$  & $ \fin{db\text{-}request\text{-}id} = 3,~\fin{node} = 3,~\fin{request} = 3,~\fin{response} = 3$\\
	pyv-hybrid-reliable-broadcast $\hfill$ $E$  & $ \fin{node} = 3,~\fin{quorum\text{-}a} = 3,~\fin{quorum\text{-}b} = 3$\\
	pyv-firewall $\hfill$ $E$  & $ \fin{node} = 3$\\
	ex-majorityset-leader-election $\hfill$ $E$  & $ \fin{node} = 3,~\fin{nodeset} = 3$\\
	pyv-consensus-epr $\hfill$ $E$  & $ \fin{node} = 3,~\fin{quorum} = 3,~\fin{value} = 2$\\
	\hline
	ex-distributed-lock-abstract $\hfill$ $<$  & $ \fin{epoch} = 4,~\fin{node} = 2$\\
	ex-decentralized-lock $\hfill$ $<$  & $ \fin{node} = 2,~\fin{time} = 4$\\
	ex-distributed-lock-maxheld $\hfill$ $<$  & $ \fin{epoch} = 4,~\fin{node} = 2$\\
	pyv-ticket $\hfill$ $<$  & $ \fin{thread} = 3,~\fin{ticket} = 5$\\
	i4-database-chain-replication $\hfill$ $E$ $<$  & $ \fin{key} = 1,~\fin{node} = 2,~\fin{operation} = 3,~\fin{transaction} = 3$\\
	ex-decentralized-lock-abstract $\hfill$ $<$  & $ \fin{node} = 4,~\fin{time} = 4$\\
	i4-distributed-lock $\hfill$ $<$  & $ \fin{epoch} = 4,~\fin{node} = 2$\\
	ex-ring-not-dead $\hfill$ $\circlearrowright$ $E$ $<$  & $ \fin{node} = 3$\\
	ex-ring $\hfill$ $\circlearrowright$ $<$  & $ \fin{node} = 3$\\
	ex-ring-id-not-dead-limited $\hfill$ $\circlearrowright$ $E$ $<$  & $ \fin{id} = 3,~\fin{node} = 3$\\
	pyv-ring-id-not-dead $\hfill$ $\circlearrowright$ $E$ $<$  & $ \fin{id} = 4,~\fin{node} = 3$\\
	pyv-ring-id $\hfill$ $\circlearrowright$ $<$  & $ \fin{id} = 4,~\fin{node} = 3$\\
	i4-leader-election-in-ring $\hfill$ $\circlearrowright$ $<$  & $ \fin{id} = 4,~\fin{node} = 3$\\
	\hline
\end{tabular}
}
\captionsetup{justification=centering}
\caption{
Finite instance sizes used for \ifour
\\ $\circlearrowright$ indicates protocol has a ring topology, $<$ indicates protocol has an ordered domain \\
$E$ indicates the protocol description has $\exists$
}
\label{tab:finite_sizes_i4}
\end{center}
\end{table}

\end{subappendices}

\end{document}